\pdfoutput=1
\documentclass[%
 reprint,
 floatfix,
 amsmath,
 amssymb,
 aps,
 nofootinbib,
 groupedaddress
]{revtex4-2}

\usepackage{graphicx}%
\usepackage{quantikz}
\usepackage{braket}
\usepackage[margin=1.2in,top=1.2in,bottom=1.2in]{geometry}
\usepackage{amsthm}
\usepackage{thmtools}
\usepackage{xcolor}
\usepackage[
    colorlinks=true,
    linkcolor=blue!70!black,
    citecolor=blue!70!black,
    urlcolor=red!70!black
]{hyperref}
\makeatletter
\newwrite\app@aux
\immediate\openout\app@aux=\jobname-app.aux
\immediate\write\app@aux{\relax}
\immediate\write\app@aux{\string\bibstyle{coherent_consensus_bibliographicstyle}}
\immediate\write\app@aux{\string\bibdata{coherent_consensus_references}}
\DeclareRobustCommand{\citeapp}[1]{%
  \immediate\write\app@aux{\string\citation{#1}}%
  {\@fileswfalse\cite{#1}}}
\makeatother

\newtheorem{corollary}{Corollary}
\newtheorem{fact}{Fact}
\newtheorem{definition}{Definition}
\newtheorem{remark}{Remark}

\usepackage{subcaption}
\DeclareCaptionJustification{flushing}{\flushing}
\usepackage[nameinlink,capitalise]{cleveref}
\crefname{theorem}{Theorem}{Theorems}
\Crefname{theorem}{Theorem}{Theorems}
\crefname{corollary}{Corollary}{Corollaries}
\Crefname{corollary}{Corollary}{Corollaries}
\crefname{fact}{Fact}{Facts}
\Crefname{fact}{Fact}{Facts}
\crefname{definition}{Definition}{Definitions}
\Crefname{definition}{Definition}{Definitions}
\crefname{appendix}{Appendix}{Appendixes}
\Crefname{appendix}{Appendix}{Appendixes}
\crefname{equation}{Eq.}{Eqs.}
\Crefname{equation}{Equation}{Equations}
\crefname{figure}{Fig.}{Figs.}
\Crefname{figure}{Figure}{Figures}
\crefname{section}{Sec.}{Secs.}
\Crefname{section}{Section}{Sections}

\makeatletter

\def\frontmatter@above@affiliation{\vskip 6\p@\relax}

\AtBeginDocument{%
  \long\def\@makecaption#1#2{%
    \par
    \vskip\abovecaptionskip
    \begingroup
      \small\rmfamily
      \samepage
      \flushing
      \parindent\z@
      \let\footnote\@footnotemark@gobble
      \@make@capt@title{#1}{#2}\par
    \endgroup
    \vskip\belowcaptionskip
  }%
}

\makeatother
\DeclareMathOperator{\e}{e}
\DeclareMathOperator{\im}{i\!}
\DeclareMathOperator{\Var}{Var}
\DeclareMathOperator{\uniform}{U}
\DeclareMathOperator{\upperO}{O}
\DeclareMathOperator{\mv}{MV}
\DeclareMathOperator{\cc}{CC}
\DeclareMathOperator{\total}{total}
\DeclareMathOperator{\stat}{stat}
\DeclareMathOperator{\sys}{sys}
\DeclareMathOperator{\theory}{th}
\DeclareMathOperator*{\medieumotimes}{\text{\raisebox{0.25ex}{\scalebox{0.8}{$\bigotimes$}}}}

\newcommand{\blk}[1]{\ensuremath{\vphantom{U(\phi)B_j^\dagger}#1}}
\newcommand{\Had}{\gate[disable auto height,/tikz/inner xsep=0pt,/tikz/minimum width=10pt]{\blk{H}}}
\newcommand{\M}{\gate[disable auto height,/tikz/inner xsep=0pt,/tikz/minimum width=10pt]{\blk{M}}}
\def\Uj[#1]{\gate[disable auto height,/tikz/inner xsep=0pt,/tikz/minimum width=10pt]{\blk{U_{#1}}}}
\newcommand{\gatedots}{\gate[disable auto height,/tikz/inner xsep=0pt,/tikz/minimum width=10pt]{\blk{\dots}}}
\newcommand{\vlab}[1]{\rotatebox{270}{\smash{\ensuremath{#1}}\vphantom{\ensuremath{\text{GHZ}}}}}
\newcommand{\GHZ}{\gate[wires=4,disable auto height,/tikz/inner ysep=0pt,/tikz/minimum width=10pt]{\vlab{\text{GHZ}}}}
\newcommand{\GHZdagger}{\gate[wires=4,disable auto height,/tikz/inner ysep=0pt,/tikz/minimum width=10pt]{\vlab{\text{GHZ}^\dagger}}}

\makeatletter
\def\@author@present@script#1#2#3{%
 \begingroup
  \gdef\comma@space{\textsuperscript{\,}}%
  \doauthor{#2}{#3}{\@affil@present@script}%
 \endgroup
 \advance\@tempcnta\m@ne
}%
\makeatother

\makeatletter
\let\auto@bib\@empty
\let\auto@bib@innerbib\@empty
\makeatother
\begin{document}

\title{Coherent consensus for the estimation and classification of phases}

\author{Roberto Ruiz}
\author{Alessandro Luongo}
\author{Sergi Ramos-Calderer}
\author{Jos\'e Ignacio Latorre}

\affiliation{Centre for Quantum Technologies (CQT), National University of Singapore 117543, Singapore}

\begin{abstract}
	\noindent
	Inspired by quantum metrology,
	we propose a
		{\sl coherent consensus}
	primitive to enhance the collaborative performance of independent quantum classifiers at the decision boundary.
	The basic idea consists of three steps:
	i) prepare a Greenberger-Horne-Zeilinger (GHZ) state on $n$ qubits;
	ii) apply a relative phase to each qubit, which encodes the output of one classifier;
	and iii) undo the GHZ preparation to collect all phases onto the first qubit.
	The signal is amplified by a factor of $n$,
	which reduces the sampling error by a factor of $\sqrt{n}$ relative to incoherent majority voting,
	that is, the direct combination of estimates from individual classifiers.
	Coherent consensus is valid for general estimation and classification of phases.
	We derive guarantees on the resources needed to estimate phases using this coherent consensus primitive 
	as compared to majority voting.
	We implement both techniques on a trapped-ion quantum computer
	using a given ensemble of imperfect phase gates.
	We observe the expected $\sqrt{n}$ gain in precision of coherent consensus,
	with excess dispersion at the largest $n$ consistent with device drift.
	Coherent consensus can
	thus be understood as a form of sensing inside a quantum computer.

\end{abstract}

\maketitle

\section{Introduction}
\label{sec_intro}

\noindent Efficiently learning information encoded in a quantum process is central to algorithm design.
When this information is accessible only through a black box,
the learning cost is quantified by the minimum number of oracle calls needed to reach a prescribed error and success probability.
Quantum metrology~\cite{Bollinger96, Giovannetti06}
poses an analogous question when the information to be learned is an unknown phase~$\phi$ encoded in a unitary $U_\phi$ acting as \begin{equation}
	\label{Uphi}
	\begin{split}
		U_\phi & \frac{1}{\sqrt 2}\left(|\!\uparrow\rangle + |\!\downarrow\rangle\right) \\= &\frac{1}{\sqrt 2}(\e^{-\im \phi/2}|\!\uparrow\rangle + \e^{\im \phi/2}|\!\downarrow\rangle) \ ,
	\end{split}
\end{equation}
where this phase can be produced by a tiny external magnetic field $\vec B=B \hat{z}$ along the $z$ axis.
To be precise, 
for an evolution $U_\phi=\e^{-\im T H}$ with Hamiltonian $H=-\vec\mu\cdot \vec B$, 
where $\vec\mu=\mu\vec\sigma/2$ is the spin magnetic moment and $\hbar=1$, 
the phase accumulated during the rotation is
$\phi=-T\mu B$~\cite{Bollinger96}.
The learning problem is to estimate this magnetic field,
whose action can be understood as an oracle call to this phase rotation.

In the setting of quantum metrology, 
entanglement provides the means for a quadratic improvement in the precision scaling.
Let us recall how to achieve this improvement. 
The $n$ probes
(or qubits)
are first entangled in a Greenberger-Horne-Zeilinger (GHZ) state.
The magnetic field adds a phase to each qubit.
The phases are collected onto a single qubit by undoing the GHZ preparation.
The net effect is an $n$-fold amplification of the signal $\phi$. 
A Hadamard gate is then applied to the first qubit to read out the result. 
The final probability of obtaining $\downarrow$ in the $X$ basis is $\Pr(\downarrow)=\sin^2(n \phi/2)\simeq n^2\phi^2/4$, which scales quadratically with $n$
for small signals.
This quadratic improvement is known as the \emph{Heisenberg limit}.
If no entanglement were used,
each of the $n$ experiments would read a signal $\Pr({\downarrow})=\sin^2(\phi/2)\simeq\phi^2/4$, 
which would then be classically combined to produce $n \phi^2/4$. This scales linearly with $n$, corresponding to the \emph{standard quantum limit}.
Therefore, entanglement enhances the signal by a factor of $n$.
For small~$\phi$,
this reduces the sampling effort needed to reach a specified precision by a factor of $n$.

We call 
the classical strategy that estimates the phase at the standard quantum limit \textit{majority voting},
and the strategy that uses entanglement and interference to reach the Heisenberg limit
\textit{coherent consensus}.

We note that the setup where a phase $\phi$
is encoded in a single qubit by the action of a matrix $U_\phi$ effectively arises in several contexts:
phase kickbacks~\cite{Cleve98},
calibrations of single-qubit rotations~\cite{Kimmel15},
sensing networks~\cite{Komar14, Proctor18, Eldredge18, Rubio20, Liu21},
or,
under quantum signal processing~\cite{Low17},
amplitude estimation~\cite{Rall23, Oshio25}.

\begin{figure*}[t]
    \centering
    \includegraphics[width=.48\linewidth]{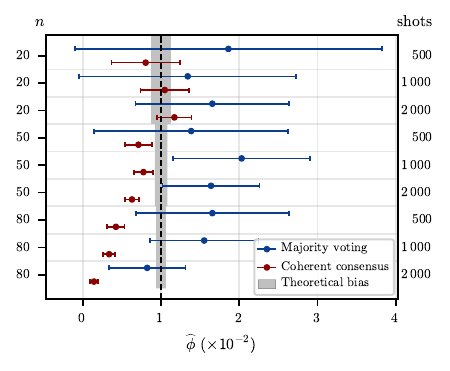}
    \hfill
    \includegraphics[width=.48\linewidth]{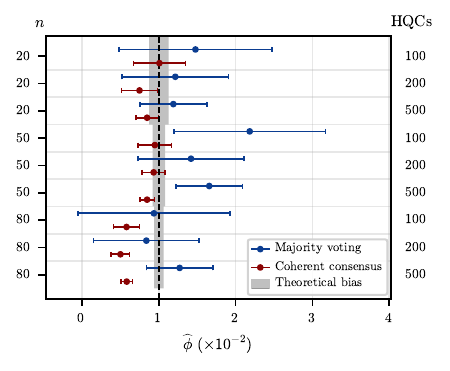}
\caption{Estimated phases
		via majority voting and coherent consensus
		with imperfect phase gates
		on the trapped-ion quantum computer~\cite{Ransford25}.
		The left panel contains the
		results at equal shot budgets, while the right panel shows outcomes at equal resource budgets, 
		quantified as hardware quantum credits (HQCs)~\cite{HQCs}.
		The dashed line corresponds to the target
		$\phi = 10^{-2}$.
		The markers represent the estimates $\widehat\phi$, and
		the error bars represent the statistical error at $95\%$ due to shot noise.
		Each imperfect phase gate carries an angle $\phi_j$, 
		drawn uniformly from a band of half-width $a=5\times10^{-3}$ around $\phi$.
		The gray shaded area represents
		the range, narrowing as $n$ grows,
		within which the theoretical bias of the estimator due to $\phi_j$
		falls for $95\%$ of the draws.
		Majority voting has wider error bars that capture $\phi$ more frequently
		($7/9$ left, $7/9$ right).
		The error bars of coherent consensus
		shrink faster
		due to its
		improved precision,
		but capture
		the target less often
		($3/9$ left, $3/9$ right).
		The theoretical bias is too small to account for the departures of coherent consensus at $n=50,80$.
		This is a sign of systematic errors resulting in a loss of accuracy.
		Yet coherent consensus 
		is still more accurate 
		than majority voting for $n=20,50$ 
		in most cases 
		($6/6$ left, $5/6$ right).
	}
	\label{fig_exp_coverage}
\end{figure*}

\begin{figure}[ht]
	\centering
	\begin{subfigure}[b]{\linewidth}
		{\centering
			\begin{quantikz}[row sep={0.575cm,between origins}, column sep=0.3cm]
				\lstick{$\ket{0}$} & \Had & \Uj[1]    & \M & \meter{} \\
				\lstick{$\ket{0}$} & \Had & \Uj[2]    & \M & \meter{} \\
				\lstick{$\ket{0}$} & \Had & \gatedots & \gatedots    & \meter{} \\
				\lstick{$\ket{0}$} & \Had & \Uj[n]    & \M & \meter{} \\
			\end{quantikz}\par}
		\caption{Multiqubit circuit realizing majority voting.
			The gates $U_{\phi_j}$ implement
			the target phase $\phi$
			imperfectly.
			The gate $M$ implements
			the
			change to the measurement basis.
			The phase $\phi$ is reconstructed
			by aggregating measurement outcomes.}
		\label{fig_maj_qc}
	\end{subfigure}
	\vspace{0.3cm}
	\begin{subfigure}[b]{\linewidth}
		{\centering
			\begin{quantikz}[row sep={0.575cm,between origins}, column sep=0.3cm]
				\lstick{$\ket{0}$} & \Had & \GHZ & \Uj[1]    & \GHZdagger & \M & \meter{} \\
				\lstick{$\ket{0}$} &           &      & \Uj[2]    &            &          &
				\\
				\lstick{$\ket{0}$} &           &      & \gatedots &            &          &
				\\
				\lstick{$\ket{0}$} &           &      & \Uj[n]    &            &          &
				\\
			\end{quantikz}\par}
		\caption{Multiqubit circuit realizing coherent consensus.
			The gate $M$ implements
			the change to the measurement basis.
			The GHZ state
			allows the phases $\phi_j$ to accumulate on the first qubit, from which
			$\phi$ is estimated.}
		\label{fig_coh_qc}
	\end{subfigure}
	\caption{Circuits to estimate $\phi$ with
		majority voting (\cref{fig_maj_qc}) and coherent consensus (\cref{fig_coh_qc}).}
	\label{fig_circuits}
\end{figure}

In this work,
we use coherent consensus to learn from an ensemble of imperfect binary classifiers with a finite shot budget.
Recall that a trained classifier sorts unseen data into categories,
yet items near the decision boundary are hard to distinguish,
requiring a larger shot budget.
Moreover, 
independently trained instances may disagree,
which can blur the boundary further.
We encode the binary classification in the sign of the phase $\phi$ of a phase gate $U_\phi$.
Ensemble disagreement is thus represented by~$n$ slightly imperfect
phases $\phi_j$ in place of the target~$\phi$,
where~$n$ is the number of qubits.
For an ensemble with phases $\phi_j$,
determining
the sign of~$\phi$
with a finite shot budget
has limited precision due to statistical sampling error.
In an experimental setting, 
the measurement results will carry further systematic errors affecting accuracy.
We derive sufficient conditions on the number of qubits~$n$
and the number of shots needed
to reach a prescribed relative error in the estimation of $\phi$ from imperfect phase gates for both majority voting~(\Cref{thm:relMV}) and coherent consensus~(\Cref{thm:relCC}).
With these results,
we give a decision guarantee for binary classification~(\Cref{cor:sign}).
Finally, we demonstrate that
coherent consensus outperforms majority voting
in precision, and in accuracy for $n=20,50$, on a trapped-ion quantum computer~\cite{Ransford25}, 
as illustrated in \cref{fig_exp_coverage}.

\subsection*{Previous work}

\noindent Phase estimation with
entangled probes
goes back to seminal works
on quantum
metrology~\mbox{\cite{Bollinger96, Giovannetti06}}.
If the unknown parameter
is an amplitude
rather than a phase,
the Heisenberg limit is
reached through
amplitude estimation~\mbox{\cite{Brassard02, Knill07}}.
Parallel
amplitude estimation~\cite{Oshio25}
can trade depth for width, using techniques very similar to those of quantum metrology.
There,
quantum signal processing~\cite{Low17}
turns the
oracle into
a rotation of the form given in \cref{Uphi}.
The phases then accumulate
in the relative phase of a GHZ state,
achieving a depth of
$\upperO(\log(1/\eta))$
at the cost of
$\upperO(\eta^{-1}\log(1/\eta))$ queries, where $\eta$ is
the target error in the amplitude.
In this work, we instead provide new bounds on the total number of
qubits $n$ and the number of
shots required
for phase estimation with imperfect phases $\phi_j$
to achieve a prescribed relative error with a given success probability.

Coherent consensus demands an upper bound on the phase
to avoid branch-cut ambiguities.
This can
be circumvented by
techniques inspired by
robust phase estimation~\cite{Higgins09, Kimmel15},
where shots are allocated across an increasing number of applications of the unitaries.
For entangling protocols,
the GHZ state is replaced with a product of GHZ states of varying sizes~\cite{Direkci26}.
Here we consider $n$ unitaries at a predetermined scale,
for which the bounds for coherent consensus hold by assumption, limiting admissible $n$ from above.

In sensing networks,
entanglement between sensors does not yield a gain for independent local phases,
but it is effective for global properties of the network.
Coherent consensus uses entanglement to estimate
the global, equally weighted
linear function of local
phases~\cite{Proctor18, Eldredge18}.
Coherent consensus reduces the variance
relative to majority voting by a factor of
$n$~\cite{Eldredge18}.
If the phases are perfect,
the precision can be bounded analytically using the quantum
Fisher information~\cite{Proctor18, Eldredge18},
while, for two sensors,
the mean-square error can be evaluated numerically~\cite{Rubio20}.
In fact,
finite-sample bounds on the mean-square error
have been derived for phase estimation with maximally entangled probes~\cite{Belliardo20}. 
Distributed estimation of an averaged phase with entangled probes has been demonstrated experimentally~\cite{Liu21}.
Programmable metrology using low-depth circuits has also been demonstrated on a trapped-ion quantum processor~\cite{Marciniak22}.
Entanglement-assisted supervised classification was proposed for sensor networks~\cite{Zhuang19} 
and demonstrated experimentally~\cite{Xia21}.

Prior work also includes \textit{quantum ensembles of classifiers}~\cite{Schuld18} and \textit{quantum bagging constructions}~\cite{Macaluso24}. 
Adaptive boosting with quantum classifiers has been demonstrated on a programmable superconducting processor~\cite{Chen25}. 
These approaches differ from the present setting in their oracle assumptions and aggregation procedures. 
Here, we study coherent phase accumulation with imperfect phase gates, 
derive finite-shot error guarantees, 
and test the resulting primitive on trapped-ion hardware.

Finally, the related term \textit{quantum majority vote}~\cite{Buhrman23} 
concerns approximating the more frequently occurring state in a product of qubits, 
each prepared in one of two orthogonal states in an unknown basis. 
In the present work, majority voting instead denotes the classical aggregation of measurement outcomes for phase estimation.

\section{Coherent consensus and majority voting}
\label{sec_theory}

\noindent We are interested in
the
advantage of coherent consensus
over
majority voting
as a primitive in digital quantum computing. We first set up the appropriate quantum circuits for each strategy. We then analyze the errors that arise in
experimental deployment. Finally, we
establish theoretical learning success guarantees.

\subsection*{Quantum circuits}

\noindent
Let us construct the quantum circuits for each strategy.
The first step is to model the unitary
$U_{\phi}$ in \cref{Uphi}
as a single phase gate.
This unitary $U_{\phi}$ could encode the results of a single-qubit classifier~\cite{PerezSalinas19, PerezSalinas21}, 
or those of a complex multiqubit process that is distilled into a phase rotation.

The task of both coherent consensus
and incoherent majority voting is to estimate $\phi$ and
classify its sign using an ensemble
of \mbox{$j=1,\ldots,n$} phase gates $U_{\phi_j}$.
We consider a model in which $U_{\phi_j}$ do not realize $\phi$~perfectly,
but instead yield different assessments $\phi_j$ of $\phi$,
\begin{equation}
	\begin{split}
		\label{U}
		U_{\phi_j}|+\rangle
		=
		\frac{1}{\sqrt{2}}(|0\rangle
		+\e^{\im\phi_j}|1\rangle) \ .
	\end{split}
\end{equation}
The deviations are expected to be mild if $U_{\phi_j}$ are well trained.
Small phases $\phi$, with which we are concerned, are the hardest to classify.

Current quantum hardware has access to fairly large qubit budgets,
yet the feasible circuit depth remains limited.
Therefore, we seek circuit implementations of the phase-estimation strategies that minimize depth at the expense of width.
The quantum circuits for majority voting and coherent consensus
are displayed in \cref{fig_maj_qc,fig_coh_qc}, respectively.
Each strategy leads to an estimator $\widehat{\phi}$ of $\phi$. In the plots, we use the same notation for both the estimator and the estimate.

The majority voting circuit is presented in \cref{fig_maj_qc}. It
has constant
depth, as the phase gates are arranged in parallel. The circuit implementing majority voting is
\begin{equation}
	U_{\mv} =
	M^{\otimes n}
	\Bigl(\underset{j=1}{\overset{n}{\medieumotimes}}\,U_{\phi_j}\Bigr)
	H^{\otimes n} \ ,
\end{equation}
where $M$ is the change of basis, namely, $M=H$ for measurements in the \mbox{$X$ basis} and \mbox{$M=HS^\dagger$} for measurements in the \mbox{$Y$ basis}. Note that
\begin{equation}
	\!\!\!\!
	\Bigl(\underset{j=1}{\overset{n}{\medieumotimes}}\,U_{\phi_j}\Bigr)
	H^{\otimes n}\ket{0}\!\vphantom{H}^{\otimes n}
	= \underset{j=1}{\overset{n}{\medieumotimes}}\frac{1}{\sqrt 2}(|0\rangle + \e^{\im \phi_j}|1\rangle) \ .
\end{equation}
Estimates of the bit frequencies
of each qubit
yield $n$ different estimators $\widehat{\phi}_j$.
The estimator of majority voting can be thought of as the average of different estimates
\footnote{To prove~\Cref{thm:relMV,thm:relCC} 
in \Cref{app_theorems}, 
we rather collect the measurement statistics for each qubit and average them to obtain $\widehat{\phi}$.
Predictions for small $\phi$ are in practice the same either way.}
\begin{equation}
	\label{phimaj}
	\widehat{\phi}_{\mv}:=\frac{1}{n}\sum_{j=1}^n\widehat{\phi}_j \ .
\end{equation}
Majority voting is thus the aggregate of individual qubit estimators.

The coherent consensus circuit in \cref{fig_coh_qc} prepares the GHZ state, applies the tensor product of
phase gates, and undoes the preparation.
The circuit implementing coherent consensus is
\begin{equation}
	U_{\cc}=M_1
	U_{\text{GHZ}}^\dagger
	\Bigl(\underset{j=1}{\overset{n}{\medieumotimes}}\,U_{\phi_j}\Bigr)
	U_{\text{GHZ}}^{\vphantom{\dagger}}\,H_1 \ ,
\end{equation}
where $H_1$ is a Hadamard gate on the first qubit,
$U_{\text{GHZ}}$ is an entangling subcircuit preparing the GHZ state, 
and $M_1$ is the change of basis on the first qubit,
the only one that is measured.
Note that
\begin{equation}
	\begin{split}
		  & \,U_{\text{GHZ}}^\dagger
		\Bigl(\underset{j=1}{\overset{n}{\medieumotimes}}\,U_{\phi_j}\Bigr)
		U_{\text{GHZ}}^{\vphantom{\dagger}}
		H_{1}\ket{0}\!\vphantom{H}^{\otimes n}                                                                     \\
		= & \,U_{\text{GHZ}}^\dagger
		\Bigl(\underset{j=1}{\overset{n}{\medieumotimes}}\,U_{\phi_j}\Bigr)\frac{1}{\sqrt{2}}(\ket{0}\!\vphantom{H}^{\otimes n}
		+\ket{1}\!\vphantom{H}^{\otimes n})                                              \\
		= & \,U_{\text{GHZ}}^\dagger\frac{1}{\sqrt{2}}(\ket{0}\!\vphantom{H}^{\otimes n}
		+\e^{\im\sum\limits_{j=1}^{n}\!\phi_j}\!
		\ket{1}\!\vphantom{H}^{\otimes n})                                               \\
		= & \,\frac{1}{\sqrt{2}}(\ket{0}
		+
		\e^{\im\sum\limits_{j=1}^{n}\!\phi_j}\!
		\ket{1})\ket{0}\!\vphantom{H}^{\otimes n-1}\ .
	\end{split}
\end{equation}
Measuring the
first qubit allows us to estimate the sum of phases
from the bit statistics.
The estimator of coherent consensus is
the normalized estimator of the sum,
\begin{equation}
	\label{phiscoh}
	\widehat{\phi}_{\cc}:=\frac{1}{n}\widehat{\sum_{j=1}^n \phi_j} \ .
\end{equation}
Note that the estimator
is single valued as long as the sum stays within $(-\pi,\pi]$ when measurements are performed in both the $X$ and $Y$ bases,
and within $[-\pi/2,\pi/2]$ when measurements are performed only in the $Y$ basis.

The quantum resources required for coherent consensus depend on the compilation of the GHZ circuit $U_{\text{GHZ}}$.
This circuit carries an unavoidable overhead of two-qubit gates over majority voting,
reflecting the cost of entanglement generation.
Moreover, a circuit with logarithmic depth in $n$ can be obtained by exploiting 
all-to-all connectivity using a number of two-qubit gates linear in $n$.
Furthermore, when ancillas and feedforward operations are available, a circuit of constant depth can be obtained,
with a number of two-qubit gates linear in $n$~\cite{Buhrman23ii}.

\subsection*{Accuracy versus precision}

\noindent
The error of the estimators $\widehat{\phi}$ of majority voting and coherent consensus has three contributions. 
First, statistical errors due to the shot sampling of the circuit. 
Second, systematic errors arising from imperfect circuit implementation, such as quantum-device noise.
Third, theoretical errors, which in our case are training deviations of~$\phi_j$ from $\phi$.
If the three contributions
are independent,
the error of $\widehat{\phi}$ about $\phi$, 
taken as
the root mean square deviation,
decomposes as
\begin{equation}
	\label{total_error}
	\varepsilon_{\total}
	=
	\sqrt{\varepsilon_{\stat}^2
		+\varepsilon_{\sys}^2
		+\varepsilon_{\theory}^2}
	\ .
\end{equation}

The difference between the statistical error and the other two is worth stressing.
The accuracy is quantified by the difference between the mean of the estimator and the ground truth, that is, its bias.
An accurate estimator has no bias and targets $\phi$, but may produce widely dispersed estimates.
Both systematic errors and theoretical errors can bias the estimator and thereby affect its accuracy.
In our framework, systematic errors arise from the physical implementation on a quantum device,
and theoretical errors encompass finite training effects and limitations of the training model. 

By contrast, statistical errors limit the precision of the estimator $\widehat{\phi}$. 
The precision quantifies the dispersion of estimates around their mean.
A precise estimator consistently produces estimates lying within a narrow interval, 
but this interval may fail to include $\phi$ if the estimator is biased.
In our framework, 
statistical error for majority voting and coherent consensus is mainly due to
sampling error,
although
other sources of error could in principle contribute to dispersion as well.
This assumes that the device noise is uncorrelated across shots.
Majority voting obeys the standard quantum limit,
\begin{equation}
	\label{stat_var_mv}\varepsilon_{\mv,\stat}=\sqrt{\Var_{\text{shot}}[\widehat\phi_{\mv}]}
	=
	\upperO\!\left(
	\frac{1}{\sqrt{nm}}
	\right) \ ,
\end{equation}
where $m$ denotes the number of shots, whereas coherent consensus obeys the Heisenberg limit,
\begin{equation}
	\label{stat_var_coh}
	\varepsilon_{\cc,\stat}=\sqrt{\Var_{\text{shot}}[\widehat\phi_{\cc}]}
	=
	\upperO\!\left(
	\frac{1}{n \sqrt{m}}
	\right) \ .
\end{equation}
Thus, coherent consensus offers a precision gain over majority voting by a factor of 
$\sqrt{n}$, 
a statement that we now formalize.

\subsection*{Learning and theoretical guarantees}

\noindent
Beyond its roots in quantum metrology, the classification of~$\phi$ admits a neat formulation akin to the query model of computation~\cite{Angluin88, bennett1997strengths}.
Accordingly, the phase~$\phi$ corresponds to the optimal Bayes score, the hypothesis class is the family $\{U_{\phi^\prime}\}$, and the ensemble members $U_{\phi_j}$  are black boxes whose phases are unknown to the learner.
This representation inspires us to derive
success guarantees for both majority voting and coherent consensus. 
To obtain them, however, we need some assumptions.

To mimic the result of training on the phases,
we assume the decomposition $\phi_j = \phi + \gamma_j$ in \cref{U}, where the perturbations~$\gamma_j$ 
are independent and identically distributed (i.i.d.).
These perturbations are fixed across runs of the circuit.
In the limit of a large number of qubits,
we expect the resulting phases to exhibit bounded noise around~$\phi$.
We also expect the average of the perturbations $\gamma_j$ to approach zero as $n$ grows. 
The definition of imperfect phase gates formalizes these assumptions.

\begin{definition}[Imperfect phase gates]
	\label{def:imperfect}
	Consider the phase $\phi\in(-\pi,\pi]$ and let $n\in\mathbb N$.
	An \emph{ensemble of $n$ imperfect phase gates} for $\phi$ is the family of
	unitaries $U_{\phi_j}$ in \cref{U}, $j=1,\dots,n$, with phases
	\begin{equation}
		\label{phij}
		\phi_j = \phi + \gamma_j \ ,
	\end{equation}
	where the perturbations are i.i.d. random variables
	bounded almost surely by some $a\ge 0$ and symmetric about zero:
	\begin{equation}
		\label{gammaj}
		\Var[\gamma_j]=\sigma^2 \ , \quad
		|\gamma_j|\le a \ , \quad
		\gamma_j\overset{\mathrm d}{=}-\gamma_j \ .
	\end{equation}
\end{definition}

Note that, while the definition does not strictly encompass Gaussian noise $\gamma_j$ since $|\gamma_j|\le a$ almost surely, the results are expected to hold approximately if the tails of $\gamma_j$ beyond $a$ are negligible.
Moreover,
the formalization of imperfect phase gates in \Cref{def:imperfect} also enables us to compute the scaling of theoretical errors.
Once sampled, the $\gamma_j$ remain fixed throughout different runs of the circuits. For small symmetric perturbations $\gamma_j$, it follows that
\begin{equation}
	\label{th_var}
	\begin{split}
		\varepsilon_{\mv,\theory}
		=\varepsilon_{\cc,\theory}
		=
		\upperO\!\left(
		\frac{\sigma}{\sqrt{n}}
		\right) \ .
	\end{split}
\end{equation}
Unlike statistical errors in \cref{stat_var_mv,stat_var_coh}, the theoretical error of collecting $n$ imperfect phase gates is the same for majority voting and coherent consensus. 
Since the $\gamma_j$ are held fixed, 
theoretical error is just a bias.
Only across renewed draws of the perturbations $\gamma_j$
does the bias offset have zero mean and spread $\sigma/\sqrt{n}$.
If $\sigma\ll 1/\sqrt{m}$,
the bias is subleading with respect to shot dispersion for majority voting;
if $\sigma\ll 1/\sqrt{mn}$ instead,
it is subleading for coherent consensus.

We now consider success guarantees for the estimator $\widehat{\phi}$. The statements require informed bounds on the otherwise unknown $|\phi|$. This situation corresponds to a controlled-classification scenario. The lower bound $\phi_{\min}>0$ prevents the target phase from approaching zero arbitrarily closely. 
Conversely, the upper bound~$\phi_{\max}$ prevents $\phi$ from wrapping around the unit circle. In coherent consensus, requiring the estimator to remain within $(-\pi,\pi]$ implies that~$\phi_{\max}$ tightens as $n$ grows.

\begin{restatable}[Bound on the relative error for majority voting]{theorem}{relativeMVlearning}
	\label{thm:relMV}
	Let
	$\phi$ be a given phase such that
	$\phi_{\min}\le|\phi|\le\phi_{\max}$,
	with $\phi_{\min}>0$.
	Let
	$\phi_j=\phi+\gamma_j$
	be the
	phases
	of $n$
	imperfect
	phase gates (\Cref{def:imperfect})
	in the quantum circuit of \cref{fig_maj_qc}.
	Denote
	the relative error by $\epsilon$~$\in (0,1)$
	and
	the
	failure probability by
	$\delta\in (0,1)$,
	and assume
	\begin{equation*}
		0\le a<\frac{\pi}{2}
		\quad\text{and}\quad
		\phi_{\max}+\epsilon\phi_{\min}<\pi \ .
	\end{equation*}
	Furthermore,
	let the number of qubits $n$ and the even number of shots $m$
	satisfy
	\begin{equation*}
		\begin{split}
			n & \ge
			\left(
			\frac{8\sigma^2}{\cos^2 a\,\epsilon^2\phi_{\min}^2}
			+
			\frac{4\tan a}{3\epsilon\phi_{\min}}
			\right)\ln\frac{4}{\delta} \ ,
			\\
			m & \ge
			\frac{128}{n\cos^2 a\,\epsilon^2\phi_{\min}^2}
			\ln\frac{8}{\delta} \ .
		\end{split}
	\end{equation*}
	Then, there is an estimator using the measurement statistics from \cref{fig_maj_qc} that satisfies
	\(
	\Pr(|1-\widehat{\phi}_{\mv}/\phi|\le \epsilon) \ge 1-\delta \ .
	\)
\end{restatable}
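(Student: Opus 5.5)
Split the shots evenly between the $X$ and $Y$ bases, $m/2$ each; this is why $m$ is even. Let $p^X_j=\frac{1+\cos\phi_j}{2}$ and $p^Y_j=\frac{1+\sin\phi_j}{2}$ be the probabilities of outcome $0$ on qubit $j$. As in the footnote, we pool the statistics over qubits. The empirical frequencies $\widehat{C}$ and $\widehat{S}$ from the $nm/2$ samples in each basis are unbiased estimators of
\[
C=\frac1n\sum_j\cos\phi_j \quad\text{and}\quad S=\frac1n\sum_j\sin\phi_j ,
\]
after the affine map $x\mapsto 2x-1$. The estimator is $\widehat{\phi}_{\mv}=\operatorname{atan2}(\widehat S,\widehat C)$. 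The error then splits into a theoretical part, the deviation of $\arg(C+\im S)$ from $\phi$, and a statistical part, the deviation of $\arg(\widehat C+\im\widehat S)$ from $\arg(C+\im S)$. We allocate $\epsilon\phi_{\min}/2$ of the absolute error budget and $\delta/2$ of the failure budget to each. Since $|\phi|\ge\phi_{\min}$, an absolute error $\le\epsilon\phi_{\min}$ gives relative error $\le\epsilon$.

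\emph{Theoretical part.} Write $C+\im S=\e^{\im\phi}\,\frac1n\sum_j \e^{\im\gamma_j}=\e^{\im\phi}(A+\im B)$, with $A=\frac1n\sum_j\cos\gamma_j$ and $B=\frac1n\sum_j\sin\gamma_j$. Since $|\gamma_j|\le a<\pi/2$, we have $A\ge\cos a>0$, and the angle offset is $\arctan(B/A)$, with $|\arctan(B/A)|\le|B|/\cos a$. By the symmetry $\gamma_j\overset{\mathrm d}{=}-\gamma_j$, the terms $\sin\gamma_j$ have mean zero. They have variance $\le\sigma^2$, because $|\sin x|\le|x|$, and are bounded by $\sin a\le\tan a$. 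Bernstein's inequality gives
\[
\Pr\!\left(|B|\ge t\right)\le 2\exp\!\left(-\frac{nt^2/2}{\sigma^2+t\sin a/3}\right).
\]
Set $t=\cos a\,\epsilon\phi_{\min}/2$ and require this to be $\le\delta/2$. This requires
\[
n\ge\left(\frac{2\sigma^2}{t^2}+\frac{2\sin a}{3t}\right)\ln\frac4\delta
=\left(\frac{8\sigma^2}{\cos^2a\,\epsilon^2\phi_{\min}^2}+\frac{4\tan a}{3\epsilon\phi_{\min}}\right)\ln\frac4\delta ,
\]
which matches the hypothesis exactly, confirming the allocation.

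\emph{Statistical part.} Conditioned on the $\gamma_j$, each of $\widehat C$ and $\widehat S$ is an average of $nm/2$ independent variables in $[-1,1]$. Hoeffding's inequality gives $|\widehat C-C|\le u$ and $|\widehat S-S|\le u$ simultaneously with probability $\ge 1-4\exp(-nmu^2/4)$. We need this failure to be $\le\delta/2$. The modulus satisfies $|C+\im S|=|A+\im B|\ge A\ge\cos a$. A perturbation of size at most $\sqrt2 u$ of a point of modulus $r\ge\cos a$ changes its argument by at most $\arcsin(\sqrt2u/r)$. This is $\lesssim \sqrt2u/\cos a$, up to a constant factor that we will absorb, when $\sqrt2u\le r/2$. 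Choosing $u$ proportional to $\cos a\,\epsilon\phi_{\min}$, with constants tuned so the angular deviation is $\le\epsilon\phi_{\min}/2$, gives $m\ge\frac{c}{n\cos^2a\,\epsilon^2\phi_{\min}^2}\ln\frac8\delta$. We expect $c=128$ to come out of this choice, for instance $u=\cos a\,\epsilon\phi_{\min}/(4\sqrt2)$ together with a crude bound $\arcsin x\le \pi x/2$ or a direct bound on the tangent. Finally, the assumption $\phi_{\max}+\epsilon\phi_{\min}<\pi$ ensures that the total deviation keeps the argument away from the branch cut at $\pm\pi$. Thus atan2 returns the principal value near $\phi$, and the angular errors add linearly. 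A union bound over the two events gives total failure probability $\le\delta$.

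\emph{Main obstacle.} The main obstacle is the argument-perturbation step: obtaining a clean Lipschitz bound for $\operatorname{atan2}$ near a point of modulus $\ge\cos a$ with explicit constants reproducing $128$. This must be done while handling the branch cut and the requirement that $u$ be small relative to $\cos a$. That requirement is automatic because $\epsilon\phi_{\min}<\pi$ makes $u<\cos a/2$ for the chosen constants. If exact constants do not match, we would bound $|\tan(\widehat\theta-\theta)|=|\widehat S C'-\widehat C S'|/(\cdots)$ in rotated coordinates directly. Here $C'$ and $S'$ are the components along and perpendicular to $\e^{\im\theta}$, with $\theta=\arg(C+\im S)$.
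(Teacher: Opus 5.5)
Your proposal is correct and follows the paper's proof essentially step for step. You rotate by $-\phi$ so that the ensemble bias reduces to a Bernstein bound on the mean of $\sin\gamma_j$ with $A\ge\cos a$; you then use pooled Hoeffding on the $X$ and $Y$ frequencies, circular stability of $\operatorname{atan2}$ at radius at least $\cos a$, the margin $\phi_{\max}+\epsilon\phi_{\min}<\pi$ for the branch cut, and a union bound. The constant you hedged on does come out of your suggested choice $u=\cos a\,\epsilon\phi_{\min}/(4\sqrt2)$: the failure probability is $4\exp(-nm\cos^2 a\,\epsilon^2\phi_{\min}^2/128)\le\delta/2$, and the angular error is at most $\arcsin(\epsilon\phi_{\min}/4)\le\pi\epsilon\phi_{\min}/8<\epsilon\phi_{\min}/2$. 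So the bound $\arcsin x\le\pi x/2$ on $[0,1]$ already finishes the step, and no extra smallness condition such as $\sqrt2u\le r/2$ is needed. The paper instead allows a deviation of $\cos a\,\sin(\epsilon\phi_{\min}/2)$ and then uses $\sin x>x/2$.
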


\begin{restatable}[Bound on the relative error for coherent consensus]{theorem}{relativeCClearning}
	\label{thm:relCC}
	Let
	$\phi$ be a given phase such that
	$\phi_{\min}\le|\phi|\le\phi_{\max}$,
	with $\phi_{\min}>0$.
	Let $\phi_j=\phi+\gamma_j$
	be the phases of $n$
	imperfect phase gates (\Cref{def:imperfect})
	in the quantum circuit of \cref{fig_coh_qc}.
	Denote the relative error by $\epsilon$~$\in (0,1)$ and the
	failure probability by
	$\delta\in (0,1)$,
	and assume
	\begin{equation*}
		\phi_{\max}<\frac{\pi}{n(1+\epsilon)}-a \ ,
	\end{equation*}
	where $n$ is the number of qubits.
	Let $n$ and the even number of
	shots $m$ satisfy
	\begin{equation*}
		\begin{split}
			n & \ge
			\left(
			\frac{8\sigma^2}{\epsilon^2\phi_{\min}^2}
			+
			\frac{4a}{3\epsilon\phi_{\min}}
			\right)\ln\frac{4}{\delta} \ ,
			\\
			m & \ge \frac{128}{n^2\epsilon^2\phi_{\min}^2}\ln\frac{8}{\delta} \ .
		\end{split}
	\end{equation*}
	Then, there is an estimator using the measurement statistics from~\cref{fig_coh_qc} that satisfies \(
	\Pr(|1-\widehat{\phi}_{\cc}/\phi|\le \epsilon) \ge 1-\delta
	\).
\end{restatable}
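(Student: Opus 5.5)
We prove \Cref{thm:relCC} by splitting the error into a theoretical part, controlled over the draw of the perturbations~$\gamma_j$, and a statistical part, controlled over the shots; each gets failure probability $\delta/2$ and contributes relative error $\epsilon/2$. Write $\Phi:=\sum_{j=1}^n\phi_j=n\phi+\sum_j\gamma_j$ and $\bar\gamma:=\frac1n\sum_j\gamma_j$, so that the ideal quantity is $\Phi/n=\phi+\bar\gamma$.

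\textbf{Theoretical part.} The $\gamma_j$ are i.i.d., zero mean by symmetry, with variance $\sigma^2$ and $|\gamma_j|\le a$. Bernstein's inequality then gives
\begin{equation*}
\Pr\!\left(|\bar\gamma|\ge t\right)\le 2\exp\!\left(-\frac{n t^2}{2\sigma^2+\tfrac{2}{3}a t}\right).
\end{equation*}
We set $t=\epsilon\phi_{\min}/2\le\epsilon|\phi|/2$ and require the right-hand side to be at most $\delta/2$. Solving gives
\begin{equation*}
n\ge\left(\frac{8\sigma^2}{\epsilon^2\phi_{\min}^2}+\frac{4a}{3\epsilon\phi_{\min}}\right)\ln\frac4\delta,
\end{equation*}
which is exactly the stated hypothesis on $n$. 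On this event $|\Phi/n-\phi|\le\epsilon|\phi|/2$.

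\textbf{Statistical part.} The measured qubit is in the state $(\ket0+\e^{\im\Phi}\ket1)/\sqrt2$. We spend $m/2$ shots in the $X$ basis and $m/2$ in the $Y$ basis; this is why $m$ is even. The $X$-basis probability is $p_X=(1+\cos\Phi)/2$ and the $Y$-basis probability is $p_Y=(1+\sin\Phi)/2$, up to sign conventions. Let $\hat c=2\hat p_X-1$ and $\hat s=2\hat p_Y-1$, and define the estimator $\widehat\Phi=\operatorname{atan2}(\hat s,\hat c)$ with $\widehat\phi_{\cc}=\widehat\Phi/n$.

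By Hoeffding's inequality with $m/2$ samples, $|\hat c-\cos\Phi|\le\eta$ and $|\hat s-\sin\Phi|\le\eta$ each hold with failure probability at most $2\e^{-m\eta^2/4}$. A union bound makes the total statistical failure at most $4\e^{-m\eta^2/4}$, and we require this to be at most $\delta/2$, which gives the $\ln(8/\delta)$.

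Next we convert the Cartesian error into an angular one. If the point $(\hat c,\hat s)$ lies within distance $\sqrt2\,\eta$ of the unit vector $\e^{\im\Phi}$, then $|\widehat\Phi-\Phi|\le\arcsin(\sqrt2\,\eta)$. This is at most $2\eta$ when, say, $\sqrt2\,\eta\le 1/\sqrt2$.

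We need $|\widehat\Phi-\Phi|/n\le\epsilon\phi_{\min}/2$. Choosing $2\eta=n\epsilon\phi_{\min}/2$, i.e.\ $\eta=n\epsilon\phi_{\min}/4$, the condition $4\e^{-m\eta^2/4}\le\delta/2$ becomes
\begin{equation*}
m\ge\frac{64}{n^2\epsilon^2\phi_{\min}^2}\ln\frac8\delta.
\end{equation*}
The stated constant $128$ leaves room for a cruder angular bound, for instance $|\widehat\Phi-\Phi|\le 2\sqrt2\,\eta$. We will adopt whichever chain lands on $128$. If $\eta$ is large, the target relative precision is coarse and a separate trivial argument, or an appeal to the hypothesis, should cover that regime; this needs checking.

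\textbf{Branch cuts and combination.} On the good $\gamma$-event, $|\Phi|\le n(|\phi|+a)<\pi/(1+\epsilon)$ by the hypothesis $\phi_{\max}<\pi/(n(1+\epsilon))-a$. The admissible errors widen this by at most $n\epsilon|\phi|/2$, so both $\Phi$ and $\widehat\Phi$ remain inside $(-\pi,\pi)$ and atan2 needs no unwrapping. This margin argument is the main delicate point: we must check that $|\Phi|+n\epsilon\phi/2<\pi$ follows from the hypothesis, using $n(1+\epsilon)(\phi_{\max}+a)<\pi$.

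A union bound over the two events then gives, with probability at least $1-\delta$,
\begin{equation*}
|\widehat\phi_{\cc}-\phi|\le|\widehat\Phi-\Phi|/n+|\bar\gamma|\le\epsilon|\phi|,
\end{equation*}
that is, $|1-\widehat\phi_{\cc}/\phi|\le\epsilon$.
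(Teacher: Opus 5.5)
Your proposal is correct and follows the paper's proof: Bernstein on $\sum_j\gamma_j$ at level $\delta/2$, then Hoeffding on the $X$ and $Y$ frequencies with $m/2$ shots each, atan2 circular stability plus a branch-cut margin, and a union bound. You apply Bernstein directly at $t=\epsilon\phi_{\min}/2$, which avoids the paper's monotonicity argument for $g$. Your sharper arcsin step gives the constant $64$, which the stated $128$ already implies, so you do not need to land on $128$ exactly.

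The two points you left open both follow from the hypothesis $n(\phi_{\max}+a)<\pi/(1+\epsilon)$:
\begin{itemize}
\item It gives $\eta=n\epsilon\phi_{\min}/4<\pi/8<1/2$, so the ``$\eta$ large'' regime never occurs.
\item It also gives $|\Phi|+n\epsilon\phi_{\min}/2<\pi(1+\epsilon/2)/(1+\epsilon)<\pi$, which settles the branch-cut margin.
\end{itemize}
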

\noindent The proofs of \Cref{thm:relMV,thm:relCC}
are relegated to \Cref{app_theorems}.

A remark on the measurement basis in \cref{fig_maj_qc,fig_coh_qc} is in order.
In practice, we perform measurements in the $Y$ basis for both circuits.
This choice is supported by the fact that $\Pr(0)=(1+\sin\phi)/2$ in the $Y$ basis is
more sensitive than $\Pr(0)=(1+\cos\phi)/2$ in the $X$ basis for small~$\phi$.
However, our proof of theoretical success guarantees
requires
measurements in the $X$ basis as well, 
hence our choice above 
to let $M$ denote either change of basis.

We conclude the section with a remark and a decision guarantee.

\begin{remark}
	\label{cor:queries}
	Let $q:=nm$ be the total number of unitary calls.
	The shot
	conditions of \Cref{thm:relMV,thm:relCC} read
	\begin{equation*}
		\begin{split}
			q_{\mv} & \ge            \frac{128}{\cos^2a\,\epsilon^2\phi_{\min}^2}\ln\frac{8}{\delta} \ , \\
			q_{\cc} & \ge
			\frac{128}{n\,\epsilon^2\phi_{\min}^2}\ln\frac{8}{\delta} \ .
		\end{split}
	\end{equation*}

\end{remark}

\begin{corollary}[Decision guarantee]
	\label{cor:sign}
	Assume the hypotheses of
	\Cref{thm:relMV} or \Cref{thm:relCC}.
	Since $\epsilon\in(0,1)$,
	the event $|1-\widehat\phi/\phi|\le\epsilon$,
	which occurs with probability at least $1-\delta$,
	implies
	that $\widehat\phi$ and $\phi$ have the same sign,
	hence that $\phi$ is correctly classified.
\end{corollary}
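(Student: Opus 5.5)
The statement to prove is \Cref{cor:sign}. The plan is a direct argument from the event guaranteed by \Cref{thm:relMV} or \Cref{thm:relCC}, with no further probabilistic work. Under the hypotheses of either theorem, the event $E:=\{|1-\widehat\phi/\phi|\le\epsilon\}$ has probability at least $1-\delta$. Since $\phi_{\min}>0$ and $|\phi|\ge\phi_{\min}$, we have $\phi\neq0$, so the ratio $\widehat\phi/\phi$ is well defined. It therefore suffices to show that $E$ deterministically implies $\operatorname{sign}\widehat\phi=\operatorname{sign}\phi$.

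First, on $E$ we rewrite the condition as $1-\epsilon\le\widehat\phi/\phi\le1+\epsilon$. Because $\epsilon\in(0,1)$, the lower bound gives $\widehat\phi/\phi\ge1-\epsilon>0$. Second, a strictly positive ratio of two real numbers forces them to share the same sign; in particular $\widehat\phi\neq0$. Equivalently, $\widehat\phi\,\phi=(\widehat\phi/\phi)\,\phi^2\ge(1-\epsilon)\phi_{\min}^2>0$. Third, we classify $\phi$ by the sign of $\widehat\phi$. This classification is correct on $E$, so its success probability is at least $\Pr(E)\ge1-\delta$.

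There is no genuine obstacle here; all the difficulty lies in \Cref{thm:relMV,thm:relCC}, which we take as given. The only points to check are that $\phi\neq0$, which is supplied by $\phi_{\min}>0$, and that the strict inequality $\epsilon<1$ is used. At $\epsilon=1$ the event would allow $\widehat\phi=0$, and the sign would then be undetermined.
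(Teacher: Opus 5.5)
Your proposal is correct and follows essentially the paper's own reasoning, which is stated inline in the corollary without a separate proof. On the event $|1-\widehat\phi/\phi|\le\epsilon$ one has $\widehat\phi/\phi\ge 1-\epsilon>0$ because $\epsilon<1$, so $\widehat\phi$ and $\phi$ share a sign, and $\phi\neq 0$ is guaranteed by $\phi_{\min}>0$.
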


\section{Experimental tests}

\begin{figure*}[t]
	\centering
	\includegraphics[width=0.49\linewidth]{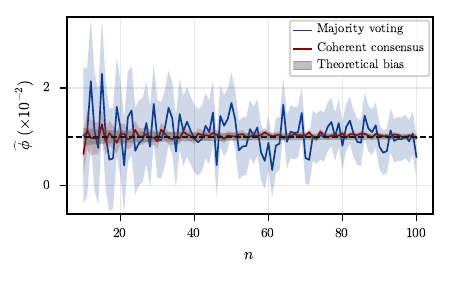}
	\hfill
	\includegraphics[width=0.49\linewidth]{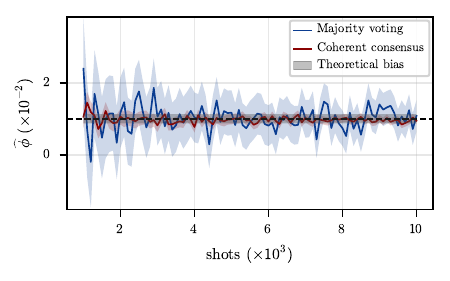}
	\caption{Estimated phases
		from the noiseless simulations
		of majority voting and coherent consensus circuits
		(\cref{fig_maj_qc,fig_coh_qc}),
		against the number of qubits (left,
		$m = 2\,000$)
		and against the number of shots (right,
		$n = 20$).
		The dashed line corresponds to the target $\phi = 10^{-2}$.
		Colored shaded bands represent the statistical error
		at $95\%$ confidence.
		Perturbations $\gamma_j$ are drawn from $\uniform(-a,a)$ with $a=5\times10^{-3}$.
		The gray shaded band represents
		the range of the theoretical bias
		at $95\%$ confidence.
		The absence of device noise is reflected
		in the lack of systematic bias,
		so that both estimators are accurate up to the theoretical bias.
		The confidence interval of coherent consensus
		is narrower than that of majority voting
		and shrinks faster
		as $n$ grows (left).
		The theoretical bias becomes comparable to the statistical error of coherent consensus for $n\gtrsim60$ (left) and $m\gtrsim6\times10^3$ (right).
		}
	\label{fig_sim_phase}
\end{figure*}

\label{sec_sim_and_exp}

\noindent
We deploy the coherent consensus and majority voting primitives on both noiseless simulators and quantum hardware to test their viability in realistic scenarios.
The following experiments target a phase $\phi = 10^{-2}$. Imperfect rotations are obtained by drawing $\gamma_j\sim\uniform(-a,a)$ with $a = 5 \times 10^{-3}$, where $\uniform(-a,a)$
is the uniform distribution
over $[-a,a]$.
Note that, by the central limit theorem, 
the average of the sampled $\gamma_j$ lies, with high probability, within a half-width of order $a/\sqrt{n}$ around zero.
This setting is designed to mimic the 
binary classification encoded in the sign
of an unknown, bounded,
small phase~$\phi$,
predicted using potentially imperfect gates $U_{\phi_j}$.
All measurements are performed in the $Y$ basis.

\subsection*{Simulation}
\noindent
We first demonstrate that coherent consensus outperforms majority voting in noiseless circuits.
We run simulations with numbers of qubits and shots typical of current and near-term devices. 
We show
the estimated phase in \cref{fig_sim_phase} and
its relative error in \cref{fig_sim_error}.
Both majority voting and coherent consensus are accurate,
meaning they do not display any bias beyond the theoretical bias of the imperfect gates.
However,
coherent consensus is more precise than majority voting,
as it displays a narrower confidence interval. 
The scaling laws in \mbox{\cref{stat_var_mv,stat_var_coh}}
are observable in the relative error.

The bounds derived in \Cref{thm:relMV,thm:relCC} are worst-case guarantees,
which hold for all settings satisfying the assumptions.
Being based on Chernoff-type inequalities, they are loose in practice.
For instance,
in the best-case scenario $\phi_{\min}=\phi_{\max}=\phi$,
with $\phi=10^{-3}$, $a=5\times10^{-4}$, $\epsilon=5\%$, and $\delta=5\%$,
both theorems require $n\gtrsim 1\,200$ qubits,
together with $m\gtrsim 2.1\times10^{8}$ shots for majority voting
and $m\gtrsim 1.7\times10^{5}$ shots for coherent consensus.
Yet coherent consensus already attains $\epsilon\lesssim5\%$
for every $n\gtrsim400$ at $m=10^{4}$ in \cref{fig_sim_error} (left),
and for every $m\gtrsim2\times10^{4}$ at $n=250$ (right),
that is, with about a third of the qubits and over an order of magnitude fewer shots than the bound requires (left), or about a fifth of the qubits and nearly an order of magnitude fewer shots (right).
Majority voting, for which the bound requires $2.1\times10^{8}$ shots,
would reach $\epsilon\lesssim5\%$ 
for every $n\gtrsim1.5\times10^{5}$ at $m=10^{4}$, or every $m\gtrsim6\times10^{6}$ at $n=250$, beyond the range of \cref{fig_sim_error}.
These results come from single runs and therefore do not establish the success probability $1-\delta$ guaranteed by the theorems,
but they show how conservative the bounds are.

\begin{figure*}[t]
	\centering
	\includegraphics[width=0.49\linewidth]{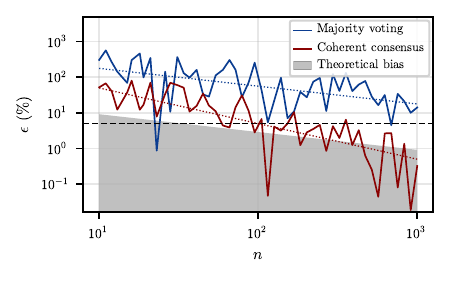}
	\hfill
	\includegraphics[width=0.49\linewidth]{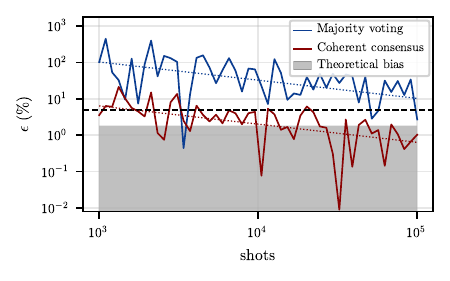}
	\caption{Relative errors
		from the simulations
		of majority voting and coherent consensus circuits
		(\cref{fig_maj_qc,fig_coh_qc}),
		against the number of qubits (left, $m = 10^4$)
		and against the number of shots (right, $n = 250$).
		The guide lines show the expected statistical-error scaling $\propto n^{-1/2}$
		for majority voting and $\propto n^{-1}$ for coherent consensus (left), and $\propto m^{-1/2}$
		for both majority voting and coherent consensus (right).
		The dashed line corresponds to
		the reference value $\epsilon = 5\%$.
		The target is
		scaled from $\phi = 10^{-2}$
		to $\phi = 10^{-3}$.
		Perturbations are correspondingly scaled such that 
		$\gamma_j$ are drawn from $\uniform(-a,a)$ with $a=5\times10^{-4}$.
		This scaling allows us to span $n=10,\ldots,1\,000$ while preserving $n\phi < \pi/2$, as required to obtain $\widehat{\phi}_{\cc}$.
		The gray shaded band represents
		the range of the theoretical bias
		at $95\%$ confidence over repeated draws.
		The absence of device noise is reflected
		in the lack of systematic bias,
		so that the relative error tends to zero as $n$ grows,
		and to the realized bias as $m$ grows at fixed $n$.
		Coherent consensus outperforms majority voting
		at nearly every point
		and reaches $\epsilon\lesssim 5\%$
		for
		$n \gtrsim 4\times10^2$ (left)
		and
		$m \gtrsim$ $2\times10^4$ (right).
		The theoretical bias becomes comparable to the statistical error of coherent consensus for $n\sim10^3$ (left) and $m\sim5\times10^4$ (right).
	}
	\label{fig_sim_error}
\end{figure*}

\subsection*{Hardware}
\noindent
To test the robustness of coherent consensus and majority voting on current noisy hardware, 
we run circuits on a 98-qubit trapped-ion quantum computer~\cite{Ransford25}.
Such devices~\cite{Bruzewicz19, FossFeig24}
pair long coherence times and high gate fidelities with slow gates and readout.
We estimate the target phase on \mbox{$n=20,\,50,\,80$} qubits
using \mbox{$m=500,\,1\,000,\,2\,000$} shots for both majority voting and coherent consensus, and we present the results in \cref{fig_exp_coverage} (left).
The two strategies are executed in separate campaigns with identical realized ensembles of phase gates.

A precision gain on hardware depends on whether the observed dispersion corresponds to shot noise.
We test this in \cref{fig_exp_ratio},
where we split each experiment into batches of consecutive shots.
The dispersion of majority voting and of coherent consensus at $n=20,50$ is consistent with shot noise.
For coherent consensus, the dispersion exceeds that predicted by shot noise by a factor of four at $n=80$ with $m=1\,000, 2\,000$, consistent with device drift during the run.

Moreover,~\cref{fig_exp_coverage} 
shows that device noise degrades the accuracy of both majority voting and coherent consensus.
Coherent consensus is particularly affected by two sources of error that impact the generation of the GHZ state.
First, the error rate of two-qubit gates is higher than that of single-qubit gates. 
This exclusively affects coherent consensus
and is unavoidable for entanglement generation. Second, this construction is also more sensitive to memory errors, that is, noise accumulation on idle ions during the computation.
We counteract these effects by compiling the GHZ preparation into logarithmic depth in $n$, which, compared with constant-depth preparation, reduces the number of two-qubit gates and limits idle time.
Experimental results show that coherent consensus is more accurate than majority voting for $n=20,50$ at every shot budget tested, given current hardware error rates,
while majority voting overtakes coherent consensus for $n = 80$ due to the loss of accuracy produced by systematic errors.
Note that the accuracy degradation of coherent consensus cannot be ascribed to the theoretical bias.

\subsection*{The cost
	of coherent consensus}
\noindent
In practice, the cost of acquiring one sample for coherent consensus on a quantum device is higher than that for majority voting, due to the resources needed to generate entanglement and measure the coherent sum of phases.
A fair comparison should take this into account.
Hardware quantum credits (HQCs) are a metric that quantifies the cost of running a circuit on the hardware we are using for the experiments~\cite{HQCs}.
They are a function of the number of one- and two-qubit gates as well as measurements and qubit initializations, each with its own cost and all weighted by the number of shots.
For a fixed HQC budget, the two-qubit gate overhead of coherent consensus has a direct impact on its overall shot budget.
This produces a larger constant factor in the cost of coherent consensus without changing the overall linear scaling.
The results are contained in \cref{fig_exp_coverage} (right), which shows that under these adversarial conditions coherent consensus is still more accurate than majority voting for $n=20,50$.
This makes coherent consensus preferable not only at equal shot budgets, but also at equal credit budgets, below the limit where systematic error dominates.

\begin{figure}[t]
	\centering
	\includegraphics[width=\linewidth]{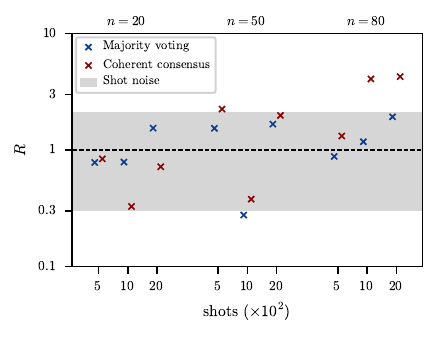}
	\caption{Dispersion ratio for the hardware runs at equal shot budgets shown in \cref{fig_exp_coverage} (left).
      Each run is split into ten batches of consecutive shots.
	  The dispersion ratio $R$ is the variance of the batch estimates $\widehat\phi$ divided by that predicted by shot noise.
      Independent shots with a common outcome probability give $R\approx1$ (dashed line).
      The gray shaded area is the $95\%$ shot-noise reference:
      under a normal approximation of the batch estimates, $9R$ approximately follows a $\chi^2$ distribution with nine degrees of freedom.
      Device noise correlated across shots, such as a drift during the run, raises $R$ above~$1$,
      while anticorrelated shots would lower it below the band.
      Majority voting lies within or just below the band ($8/9$ within).
      For $n=20,50$, coherent consensus lies within the band ($5/6$), with the remaining run just above.
      For $n=80$, coherent consensus lies well above the band ($2/3$).
      This excess batch dispersion is consistent with noise correlated across shots, 
	  but the ratio alone does not identify its source.}
	\label{fig_exp_ratio}
\end{figure}

\section{Conclusions}

\noindent In this work,
we introduced coherent consensus,
a quantum primitive
for the estimation and classification of phases with imperfect phase gates.
This primitive borrows from quantum metrology the use of entanglement for signal amplification.
Coherent consensus offers a $\sqrt{n}$ gain in precision over incoherent majority voting.
It is also well suited to parallelization on current quantum hardware,
which offers many qubits but limited coherence time. 
We established the advantage of coherent consensus over majority voting with rigorous success guarantees on the number of qubits $n$ and of shots (\Cref{thm:relMV,thm:relCC}).
This advantage holds for typical values of $n$ and typical shot budgets in classical simulations (\cref{fig_sim_phase,fig_sim_error}).
Furthermore,
we tested it on a state-of-the-art trapped-ion platform (\cref{fig_exp_ratio,fig_exp_coverage}).

The experimental results demonstrate that the precision advantage of coherent consensus holds on current quantum hardware. 
This gain in precision, 
however, 
does not translate into a gain in accuracy for large system sizes due to the error rates of the hardware implementation. 
Systematic implementation errors and theoretical modeling errors bias the estimates away from the target, even at high precision. 
By designing a scenario in which theoretical errors are under control, 
we isolate and study the effect of systematic quantum hardware errors on this task.
The results indicate that, 
although both strategies suffer from noise degradation, 
coherent consensus outperforms majority voting in accuracy for the tested qubit counts $n=20,50$, 
in all equal-shot comparisons and most equal-credit comparisons.

Single-qubit classifiers can be universal function approximators~\cite{PerezSalinas21}, 
as can single-hidden-layer neural networks~\cite{Hornik91}. 
Universal approximation alone does not establish a computational advantage.
To obtain a non-classical gain, classifiers should coordinate at the quantum level.
In the phase-estimation setting studied here, 
coherent consensus provides this coordination through entanglement, 
reducing sampling error relative to independent measurements. 
To exploit this primitive more broadly, 
further machine learning techniques must be encoded into the action of a phase gate on entangled probes.

Finally, we stress that coherent consensus is a ready-to-deploy primitive
for phase estimation and sign classification with imperfect phase gates.
Other standalone quantum techniques may follow similar steps to become part of a toolkit of primitives that can advance algorithms executed on current and future quantum devices.

\section*{Acknowledgments}

\noindent The authors thank Matthew Ho, Jun Yong, and Siong Thye for useful discussions. The authors also thank Irfan Khan for support with hardware deployment on Helios-1. 
This project is supported by the National Research Foundation, Singapore through the National Quantum Office, hosted in A*STAR, under its Advanced Quantum Computing Infrastructure Funding Initiative (S25Q8D9001) and its Centre for Quantum Technologies Funding Initiative (S24Q2d0009).

\section*{Data availability}

\noindent The data are available from the authors upon reasonable request.

\bibliographystyle{coherent_consensus_bibliographicstyle}
\bibliography{coherent_consensus_references}

\clearpage
\onecolumngrid
\appendix
\renewcommand{\theHequation}{appendix.\thesection.\arabic{equation}}
\section{Proofs of the theorems}\label[appendix]{app_theorems}

\noindent In this appendix, we prove \Cref{thm:relMV,thm:relCC}.
We begin by introducing some preliminary results.

\subsection{Preliminaries}

\label{section_test}

\begin{fact}[Bernstein inequality~\citeapp{Boucheron13}]\label{fact:bernstein-inequality}
    Let $X_1,\ldots,X_n$ be independent random variables with means \mbox{$\mu_j=\mathbb E[X_j]$}, and suppose there exists $R>0$ such that $|X_j-\mu_j|\le R$ almost surely for all~$j=1,\dots,n$.
    Define
    $
        S_n := \sum_{j=1}^n (X_j-\mu_j)
    $
    and
    $
        V := \sum_{j=1}^n \Var X_j \ .
    $
    Then, for all $t>0$, the one- and two-sided bounds hold:
    \begin{align*}
         & \Pr\!\big(S_n \ge t\big)
        \leq
        \exp\!\left(
        -\frac{t^2}{2V+\tfrac{2}{3}Rt}
        \right),                     \\
         & \Pr\!\big(|S_n|\ge t\big)
        \leq
        2\exp\!\left(
        -\frac{t^2}{2V+\tfrac{2}{3}Rt}
        \right) \ .
    \end{align*}
\end{fact}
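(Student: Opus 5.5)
The plan is the standard Cramér--Chernoff argument. Write $Y_j:=X_j-\mu_j$, so $\mathbb E[Y_j]=0$, $\mathbb E[Y_j^2]=\Var X_j$ and $|Y_j|\le R$ almost surely. For any $\lambda>0$, Markov's inequality applied to $\e^{\lambda S_n}$ together with independence gives
\[
\Pr(S_n\ge t)\le \e^{-\lambda t}\prod_{j=1}^n\mathbb E\bigl[\e^{\lambda Y_j}\bigr] \ .
\]
The whole proof then reduces to a good bound on each moment generating factor, followed by an explicit choice of $\lambda$.

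For the moment generating function I would expand $\e^{\lambda Y_j}=1+\lambda Y_j+\sum_{k\ge2}\lambda^kY_j^k/k!$. The linear term has zero mean. For $k\ge2$ I would use $|Y_j^k|\le Y_j^2R^{k-2}$, which gives
\[
\mathbb E\bigl[\e^{\lambda Y_j}\bigr]\le 1+\Var X_j\sum_{k\ge2}\frac{\lambda^kR^{k-2}}{k!} \ .
\]
Next I would use $k!\ge 2\cdot3^{k-2}$ for $k\ge2$, so the series is dominated by a geometric one. For $0<\lambda<3/R$ this yields
\[
\mathbb E\bigl[\e^{\lambda Y_j}\bigr]\le 1+\frac{\lambda^2\Var X_j}{2(1-\lambda R/3)}\le \exp\!\left(\frac{\lambda^2\Var X_j}{2(1-\lambda R/3)}\right),
\]
where the last step uses $1+x\le\e^x$. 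Taking the product over $j$ gives
\[
\Pr(S_n\ge t)\le\exp\!\left(-\lambda t+\frac{\lambda^2V}{2(1-\lambda R/3)}\right).
\]

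The step needing the most care, though it is still routine, is choosing $\lambda$ to land exactly on the stated denominator $2V+\tfrac23Rt$. I would take
\[
\lambda=\frac{t}{V+Rt/3} \ ,
\]
which satisfies $\lambda R/3<1$. With this choice $1-\lambda R/3=V/(V+Rt/3)$. Hence the quadratic term equals $t^2/\bigl(2(V+Rt/3)\bigr)$ and the linear term equals $-t^2/(V+Rt/3)$. The exponent is therefore $-t^2/\bigl(2V+\tfrac23Rt\bigr)$, which is the one-sided bound.

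For the two-sided bound I would apply the one-sided result to the variables $-X_j$. These have the same variances and the same almost-sure bound $R$, so $\Pr(-S_n\ge t)$ obeys the same estimate. A union bound over the events $S_n\ge t$ and $-S_n\ge t$ then contributes the factor $2$.
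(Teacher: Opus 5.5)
Your Cramér--Chernoff argument is correct and is the standard proof of this inequality; the paper gives no proof of its own and only cites Boucheron et al., who likewise bound the log-moment generating function by $\lambda^2V/(2(1-\lambda R/3))$ and then choose $\lambda$ as you do. The one case to treat separately is $V=0$: there your choice gives $\lambda=3/R$, so $\lambda R/3<1$ fails and the quadratic term is $0/0$, but then $S_n=0$ almost surely and both bounds hold trivially.
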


\begin{fact}[Hoeffding inequality~\citeapp{Boucheron13}]\label{fact:hoeffding}
    Let $X_1,\dots,X_n$ be independent random variables with $X_j\in[0,1]$ and let $\widehat \mu:=\frac{1}{n}\sum_{j=1}^n X_j$ have mean $\mu$. Then, for every $u>0$,
    \begin{equation*}
        \Pr\bigl(|\widehat \mu-\mu|\ge u\bigr)\le 2\e^{-2nu^2} \ .
    \end{equation*}
\end{fact}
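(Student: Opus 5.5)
The plan is the standard Cramér--Chernoff argument: control the moment generating function of the centered sum, apply Markov's inequality, and optimize the free parameter. Write $S:=\sum_{j=1}^n (X_j-\mathbb E X_j)=n(\widehat\mu-\mu)$. First I treat the upper tail $\Pr(\widehat\mu-\mu\ge u)=\Pr(S\ge nu)$. The lower tail follows by applying the same bound to $1-X_j\in[0,1]$. A union bound over the two tails then gives the factor $2$ in the statement of \Cref{fact:hoeffding}.

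For any $\lambda>0$, Markov's inequality applied to $\e^{\lambda S}$ gives
\begin{equation*}
\Pr(S\ge nu)\le \e^{-\lambda n u}\,\mathbb E\bigl[\e^{\lambda S}\bigr]=\e^{-\lambda n u}\prod_{j=1}^n \mathbb E\bigl[\e^{\lambda (X_j-\mathbb E X_j)}\bigr] .
\end{equation*}
The factorization uses independence of the $X_j$. The key ingredient is then Hoeffding's lemma: if $X\in[0,1]$ almost surely, then $\mathbb E[\e^{\lambda(X-\mathbb E X)}]\le \e^{\lambda^2/8}$ for all real $\lambda$. Granting this, the right-hand side is at most $\exp(-\lambda n u+n\lambda^2/8)$. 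Minimizing over $\lambda$ at $\lambda=4u$ yields $\e^{-2nu^2}$, which is the claimed one-sided bound.

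The main obstacle is proving Hoeffding's lemma, which I would do in two steps.

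\emph{Reduction to two points.} By convexity of $x\mapsto \e^{\lambda x}$ on $[0,1]$, we have $\e^{\lambda x}\le (1-x)+x\e^{\lambda}$. Taking expectations with $p:=\mathbb E X$ gives
\begin{equation*}
\mathbb E[\e^{\lambda(X-p)}]\le \e^{-\lambda p}(1-p+p\e^{\lambda})=:\e^{L(\lambda)},\qquad L(\lambda)=-\lambda p+\ln(1-p+p\e^{\lambda}) .
\end{equation*}

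\emph{Taylor bound on $L$.} One checks that $L(0)=0$ and $L'(0)=0$. The second derivative is
\begin{equation*}
L''(\lambda)=q(1-q),\qquad q:=\frac{p\e^{\lambda}}{1-p+p\e^{\lambda}}\in[0,1],
\end{equation*}
so $L''(\lambda)\le 1/4$ for all $\lambda$. Taylor's theorem with Lagrange remainder then gives $L(\lambda)\le \lambda^2/8$, which completes the lemma.

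The only care needed is the degenerate cases $p\in\{0,1\}$. There $X$ is almost surely constant and the bound is trivial.
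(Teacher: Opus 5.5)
Your proof is correct and complete: the Cram\'er--Chernoff bound optimized at $\lambda=4u$, Hoeffding's lemma via the convexity reduction to a two-point law with $L''=q(1-q)\le 1/4$, and the union bound over the two tails. It uses only independence, not identical distribution, which is what the paper needs when it pools non-identically distributed Bernoulli outcomes for majority voting. The paper gives no proof of its own and simply cites Boucheron, Lugosi, and Massart, who take the same Chernoff-plus-Hoeffding's-lemma route; they prove the lemma by bounding the second derivative of the log-moment generating function as a variance under the exponentially tilted law, while your two-point reduction is Hoeffding's original argument, and either works.
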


\begin{definition}[Principal value and two-argument arctangent]
    \label{def:atan2}
    The \emph{principal value} of an angle $\alpha\in\mathbb R$ is
    the unique representative
    $\operatorname{pv}(\alpha)\in(-\pi,\pi]$ satisfying
    \begin{equation*}
        \alpha-\operatorname{pv}(\alpha)\in2\pi\mathbb Z \ .
    \end{equation*}
    Pairs of angles
    differing by integer multiples of $2\pi$
    share the same principal value. For $(x,y)\ne(0,0)$, choose any $\alpha\in\mathbb R$ such that
    \begin{equation*}
        (\cos\alpha,\sin\alpha)
        =\frac{(x,y)}{\sqrt{x^2+y^2}} \ ,
    \end{equation*}
    and define
    \begin{equation*}
        \operatorname{atan2}(y,x):=\operatorname{pv}(\alpha) \ .
    \end{equation*}
    For $(x,y)=(0,0)$, we
    set $\operatorname{atan2}(0,0):=0$.
\end{definition}

\begin{fact}[Properties of the two-argument arctangent]
    \label{fact:atan2}
    The function in \Cref{def:atan2} has the following properties.
\begin{enumerate}
        \renewcommand{\labelenumi}{(\arabic{enumi})}
        \item \emph{Smoothness.} On
              $\mathbb R^2\setminus\{(x,0):x\le0\}$, the function
              $A(x,y):=\operatorname{atan2}(y,x)$ is smooth and
              \begin{equation*}
                  \nabla A(x,y)
                  =\frac{(-y,x)}{x^2+y^2},
                  \quad
                  \|\nabla A(x,y)\|_2
                  =\frac{1}{\sqrt{x^2+y^2}}.
              \end{equation*}

        \item \emph{Circular stability.} Let
              $\mathbf v=R\,(\cos\phi,\sin\phi)$ with $R>0$ and
              $\phi\in(-\pi,\pi)$, and let
              $\widehat{\mathbf v}=(\widehat x,\widehat y)$ satisfy
              $d:=\|\widehat{\mathbf v}-\mathbf v\|_2<R$.
              Then $\widehat{\mathbf v}\ne(0,0)$ and
              \begin{equation*}
                  \min_{k\in\mathbb Z}
                  \left|
                  \operatorname{atan2}(\widehat y,\widehat x)
                  -\phi+2\pi k
                  \right|
                  \le\arcsin\frac{d}{R} \ .
              \end{equation*}

        \item \emph{Principal-value stability.} Under the assumptions
              of circular stability, if 
              $\arcsin(d/R)<\pi-|\phi|$ also holds, then
              \begin{equation*}
                  \left|
                  \operatorname{atan2}(\widehat y,\widehat x)-\phi
                  \right|
                  \le\arcsin\frac{d}{R} \ .
              \end{equation*}
\end{enumerate}
\end{fact}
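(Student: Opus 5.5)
The plan is to treat the three items in order. Items (2) and (3) rest on an elementary planar-geometry estimate for the angle between $\widehat{\mathbf v}$ and $\mathbf v$.

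\emph{Smoothness.} On the slit plane $\mathbb R^2\setminus\{(x,0):x\le0\}$, the value $\operatorname{atan2}(y,x)$ is the unique angle $\alpha\in(-\pi,\pi)$ with $(\cos\alpha,\sin\alpha)=(x,y)/\sqrt{x^2+y^2}$. The half-angle identity then gives the explicit formula
\begin{equation*}
A(x,y)=2\arctan\frac{y}{\sqrt{x^2+y^2}+x}.
\end{equation*}
The denominator is strictly positive on the slit plane, so $A$ is a composition of smooth maps and is therefore smooth. To get the gradient, I would differentiate the identities $\cos A=x/r$ and $\sin A=y/r$ with $r=\sqrt{x^2+y^2}$. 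Alternatively, on each half-plane where $x\ne0$ or $y\ne0$, $A$ differs from $\arctan(y/x)$ or $-\arctan(x/y)$ by a constant, and differentiating those is routine. Either route yields $\nabla A=(-y,x)/r^2$, whose norm is $1/r$.

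\emph{Circular stability.} Since $\|\widehat{\mathbf v}\|_2\ge R-d>0$, we have $\widehat{\mathbf v}\ne 0$. Let $\theta\in[0,\pi]$ be the (unsigned) angle between $\widehat{\mathbf v}$ and $\mathbf v$. The proof then needs two facts.
\begin{itemize}
\item First, $\cos\theta>0$. This follows from
\begin{equation*}
\mathbf v\cdot\widehat{\mathbf v}=R^2+\mathbf v\cdot(\widehat{\mathbf v}-\mathbf v)\ge R^2-Rd>0 .
\end{equation*}
\item Second, $R\sin\theta\le d$. The quantity $R\sin\theta$ is the distance from $\mathbf v$ to the line spanned by $\widehat{\mathbf v}$. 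That distance is at most the distance from $\mathbf v$ to the point $\widehat{\mathbf v}$ on the line, which is $d$.
\end{itemize}
Since $\theta\in[0,\pi/2)$ and $\sin\theta\le d/R$, we get $\theta\le\arcsin(d/R)$. Finally, $\operatorname{atan2}(\widehat y,\widehat x)$ is an argument of $\widehat{\mathbf v}$ and $\phi$ is an argument of $\mathbf v$. Hence their difference is congruent to $\pm\theta$ modulo $2\pi$, which gives the claimed bound on the minimum over $k$.

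\emph{Principal-value stability.} Write $\psi:=\operatorname{atan2}(\widehat y,\widehat x)\in(-\pi,\pi]$, and take the $k$ from the previous item, so that $|\psi+2\pi k-\phi|\le\theta<\pi-|\phi|$. Then
\begin{equation*}
\psi+2\pi k\in\bigl(\phi-\pi+|\phi|,\ \phi+\pi-|\phi|\bigr)\subseteq(-\pi,\pi).
\end{equation*}
Both $\psi$ and $\psi+2\pi k$ now lie in $(-\pi,\pi]$, an interval of length $2\pi$ that is half-open. Two numbers in it differing by $2\pi k$ force $k=0$, which gives the stated inequality.

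The main obstacle I anticipate is in item (2). The care needed there is in ruling out the obtuse branch: $\sin\theta\le d/R$ alone also permits $\theta$ near $\pi$. This is exactly where the condition $d<R$ enters, through the positivity of $\mathbf v\cdot\widehat{\mathbf v}$. Everything else is bookkeeping modulo $2\pi$.
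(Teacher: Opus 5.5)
Your proposal is correct and follows essentially the same route as the paper. Your bound $\mathbf v\cdot\widehat{\mathbf v}\ge R^2-Rd>0$ and your point-to-line estimate $R\sin\theta\le d$ are the same two facts the paper reads off from $d^2=R^2+\rho^2-2R\rho\cos\vartheta=(\rho-R\cos\vartheta)^2+R^2\sin^2\vartheta$, your half-open-interval argument forcing $k=0$ spells out the paper's one-line remark that $|\phi|+|\vartheta|<\pi$ keeps the displacement off the branch cut, and the half-angle formula $A=2\arctan\bigl(y/(\sqrt{x^2+y^2}+x)\bigr)$ explicitly justifies the smoothness item, which the paper only asserts.
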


\begin{proof}
 Differentiating $A$ away from its branch cut gives the smoothness
    formulas. For the two stability properties, write
    \begin{equation*}
        \widehat{\mathbf v}
        =\rho\,(\cos(\phi+\vartheta),\sin(\phi+\vartheta)) \ ,
    \end{equation*}
    where $\rho>0$ and $\vartheta\in(-\pi,\pi]$ is the smaller
    signed angle from the direction of $\mathbf v$ to that of $\widehat{\mathbf v}$. Then
    \begin{equation*}
        d^2
        =R^2+\rho^2-2R\rho\cos\vartheta
        =(\rho-R\cos\vartheta)^2+R^2\sin^2\vartheta \ .
    \end{equation*}
    Since $d<R$, it follows that
    $\cos\vartheta>0$, hence
    $|\vartheta|<\pi/2$. 
    It also follows that
    $|\vartheta|\le\arcsin(d/R)$,
    thus proving circular stability.
    The additional phase-margin
    condition $\arcsin(d/R)<\pi-|\phi|$ implies 
    $|\phi|+|\vartheta|<\pi$, which
    prevents the angular displacement from crossing the
    branch cut of $\operatorname{atan2}$, thus proving principal-value stability.
\end{proof}

\begin{fact}[Bound on the absolute error for single-qubit phase estimation]
    \label{fact:phase-estimation-absolute}
    Let $\alpha\in(-\pi,\pi)$ be a phase.
    Denote the absolute-error target by $\epsilon\in(0,\pi/2)$
    and the failure probability by $\delta\in(0,1)$,
    and assume
    \begin{equation*}
        |\alpha|+\epsilon<\pi \ .
    \end{equation*}
    Let the even number of shots $m$ satisfy
    \begin{equation*}
        m\ge
        \frac{32}{\epsilon^2}
        \ln\frac{4}{\delta}\ .
    \end{equation*}
    Then there is an estimator $\widehat{\alpha}$ satisfying
    \(
        \Pr\bigl(|\widehat{\alpha}-\alpha|\le\epsilon\bigr)
        \ge 1-\delta \ .
    \)
\end{fact}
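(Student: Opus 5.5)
We need to prove Fact (single-qubit phase estimation, absolute error). The estimator splits the $m$ shots evenly: $m/2$ shots are measured in the $X$ basis and $m/2$ in the $Y$ basis, on the state $\frac{1}{\sqrt2}(\ket0+\e^{\im\alpha}\ket1)$. In the $X$ basis, $\Pr(0)=(1+\cos\alpha)/2$. In the $Y$ basis, $\Pr(0)=(1+\sin\alpha)/2$. Let $\widehat p_X,\widehat p_Y$ be the empirical frequencies of outcome $0$. We set $\widehat x:=2\widehat p_X-1$ and $\widehat y:=2\widehat p_Y-1$. These are unbiased estimators of $\mathbf v=(\cos\alpha,\sin\alpha)$, so $R=1$. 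The estimator is $\widehat\alpha:=\operatorname{atan2}(\widehat y,\widehat x)$ from \Cref{def:atan2}.

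Next we control the deviation $d=\|\widehat{\mathbf v}-\mathbf v\|_2$ by Hoeffding (\Cref{fact:hoeffding}), applied to each basis with $m/2$ Bernoulli samples and a threshold $u$ on $|\widehat p-p|$. This gives $|\widehat x-\cos\alpha|\le 2u$ and $|\widehat y-\sin\alpha|\le 2u$, each failing with probability at most $2\e^{-mu^2}$. A union bound makes the total failure probability at most $4\e^{-mu^2}$. On the success event, $d\le 2\sqrt2\,u$. We want $\arcsin d\le\epsilon$. It suffices that $d\le\sin\epsilon$, and since $\epsilon<\pi/2$, concavity gives $\sin\epsilon\ge 2\epsilon/\pi$. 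To keep constants clean, I would instead require $d\le \epsilon/2$, using $\sin\epsilon\ge\epsilon/2$ on $(0,\pi/2)$. Choosing $u=\epsilon/(4\sqrt2)$ gives $d\le\epsilon/2\le\sin\epsilon$, with failure probability $4\exp(-m\epsilon^2/32)\le\delta$ precisely when $m\ge\frac{32}{\epsilon^2}\ln\frac4\delta$. This matches the stated constant, so this is the intended route.

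Finally we invoke \Cref{fact:atan2}. Since $d\le\sin\epsilon<1=R$, circular stability applies and gives $\widehat{\mathbf v}\neq0$. For principal-value stability we need $\arcsin d<\pi-|\alpha|$. This holds because $\arcsin d\le\epsilon<\pi-|\alpha|$ by the hypothesis $|\alpha|+\epsilon<\pi$. Hence $|\widehat\alpha-\alpha|\le\arcsin d\le\epsilon$ with probability at least $1-\delta$. The only delicate step is the constant bookkeeping, namely the passage from the Hoeffding threshold to $\arcsin d\le\epsilon$ via $\sin\epsilon\ge\epsilon/2$. The branch-cut issue is handled entirely by the margin hypothesis.
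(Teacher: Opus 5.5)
Your proposal is correct and follows the paper's proof essentially step for step. Both use the even $X$/$Y$ shot split, the estimator $\operatorname{atan2}(2\widehat p_Y-1,2\widehat p_X-1)$, and Hoeffding with $u=\epsilon/(4\sqrt2)$ plus a union bound to get $d\le\epsilon/2$. Your inequality $\epsilon/2\le\sin\epsilon$ is the paper's $\arcsin(\epsilon/2)<\epsilon$ restated, and principal-value stability of \Cref{fact:atan2} with $R=1$, together with the margin $|\alpha|+\epsilon<\pi$, finishes the argument in both.
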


\begin{proof}
    Use $m/2$ shots for measurements in the $X$ basis and 
    $m/2$ shots for measurements in the $Y$ basis.
    The probabilities of the outcome $0$ are
    \begin{equation*}
        p_X=\frac{1+\cos\alpha}{2}
        \ , \quad
        p_Y=\frac{1+\sin\alpha}{2} \ .
    \end{equation*}
    Let
    $\widehat p_X$ and $\widehat p_Y$ be the corresponding empirical
    frequencies of outcome $0$.
    Each empirical frequency is formed from $m/2$ independent
    Bernoulli outcomes.
    Define
    \begin{equation*}
        \widehat\alpha
        :=
        \operatorname{atan2}
        \bigl(2\widehat p_Y-1,\ 2\widehat p_X-1\bigr) \ .
    \end{equation*}
    Set
    $u:=\epsilon/(4\sqrt2)$ and let $\mathcal E$ be the event
    \begin{equation*}
        |\widehat p_X-p_X|\le u
        \quad\text{and}\quad
        |\widehat p_Y-p_Y|\le u \ .
    \end{equation*}
    The Hoeffding inequality and a union bound for the complementary event give
    \begin{equation*}
        \Pr(\mathcal E^{\mathsf c})
        \le4\exp(-mu^2)
        =4\exp(-m\epsilon^2\!/32)
        \le\delta \ .
    \end{equation*}
    On $\mathcal E$, define
    \begin{equation*}
        \widehat{\mathbf v}
        :=
        (2\widehat p_X-1,2\widehat p_Y-1)
        =\|\widehat{\mathbf v}\|_2\,(\cos\widehat{\alpha},\sin\widehat{\alpha}) \ ,
        \quad
        \mathbf v:=(\cos\alpha,\sin\alpha) 
        =(2 p_X-1,2 p_Y-1)\ .
    \end{equation*}
    Then \(
    \|\widehat{\mathbf v}-\mathbf v\|_2
    \le2\sqrt2\,u
    =\epsilon/2.
    \)
    Since $\epsilon\in(0,\pi/2)$,
    it follows that
    $\arcsin(\epsilon/2)<\epsilon$.  Together with
    $\epsilon<\pi-|\alpha|$, principal-value stability in
    \Cref{fact:atan2} with $R = 1$ yields
    $
    |\widehat\alpha-\alpha|<\epsilon
    $
    on $\mathcal E$, proving the claim.
\end{proof}

\relativeMVlearning*

\begin{proof}
	Set
	$
		\eta:={\epsilon\phi_{\min}/2}
	$.
	The phase-range assumption implies
	$0<\eta<\pi/2$ and
	\(
	\pi-|\phi|>2\eta.
	\)
	First consider the random phase realizations. In coordinates
	rotated by $-\phi$, define
	\begin{equation*}
		C_n:=\frac{1}{n}\sum_{j=1}^n\cos\gamma_j,
		\quad
		S_n:=\frac{1}{n}\sum_{j=1}^n\sin\gamma_j,
		\quad
		R_n:=\sqrt{C_n^2+S_n^2}.
	\end{equation*}
	Rotating by $-\phi$ expresses each gate phase relative to the target, turning the unit vector $(\cos\phi_j,\sin\phi_j)$ into $(\cos\gamma_j,\sin\gamma_j)$; thus $(C_n,S_n)$ is the average of the rotated vectors, and its angle is the deviation of the averaged gate direction from $\phi$. We use these coordinates because this deviation is then controlled by $S_n$, a mean of independent, zero-mean, bounded variables to which the Bernstein inequality applies.
	Since $|\gamma_j|\le a<\pi/2$,
	$C_n\ge \cos a>0$
	and $R_n\ge \cos a$.
	Symmetry gives $\mathbb E[\sin\gamma_j]=0$, while
	$\Var(\sin\gamma_j)\le \sigma^2$ and
	$|\sin\gamma_j|\le\sin a$.
	Let $\mathcal F:=\{|S_n|\le\cos a\tan\eta\}$ be the event that the sampled $\gamma_j$ satisfy this inequality. If $a=\sigma=0$, then $\gamma_j=0$ almost surely for every $j$, so $\Pr(\mathcal F)=1$. Otherwise, $a>0$, and the following concentration calculation applies.
	Applying the Bernstein inequality
	(\Cref{fact:bernstein-inequality})
	yields
	\begin{equation*}
		\Pr\bigl(|S_n|>\cos a\tan\eta\bigr)
		\le
		2\exp\left(
		-\frac{n \cos^2 a\tan^2\eta}{2\sigma^2+\frac{2}3\sin a \cos a\tan\eta}
		\right) \ .
	\end{equation*}
	Using $\tan\eta\ge\eta$ and the assumed lower bound on $n$ gives
	\begin{equation*}
		\Pr\bigl(|S_n|\le \cos a \tan\eta\bigr)
		\ge1-\frac{\delta}{2} \ .
	\end{equation*}
	This probability bound also holds in the zero-noise case.
	For every phase realization, let
	$\vartheta_n:=\operatorname{atan2}(S_n,C_n)$.  Since
	$C_n\ge \cos a$,
	on $\mathcal F$ it follows that
	$
		|\vartheta_n|\le\eta.
	$
	Thus the noiseless combined phase
	$\phi_n:=\phi+\vartheta_n$ (defined as a real, unwrapped angle for every realization) differs from $\phi$ by at most $\eta$ on $\mathcal F$. It remains to control measurement noise.

	As in~\Cref{fact:phase-estimation-absolute}, we use $m/2$ shots for measurements in the $X$ basis and
	$m/2$ shots for measurements in the $Y$ basis.
	Each basis is measured for $m/2$ shots, giving $nm/2$
	independent, but not necessarily identically distributed, Bernoulli
	outcomes for fixed $\gamma_j$.
	Let
	$\widehat p_{X,j}$ and $\widehat p_{Y,j}$ be the empirical
	frequencies of outcome $0$ for the $j$th qubit.
	Their means are
	\begin{equation*}
		p_{X,j}:=\frac{1+\cos(\phi+\gamma_j)}{2},
		\qquad
		p_{Y,j}:=\frac{1+\sin(\phi+\gamma_j)}{2}.
	\end{equation*}
	Define the probabilities pooled over all qubits by
	\begin{equation*}
			p_{X}  :=\frac{1}{n}\sum_{j=1}^n p_{X,j}=\frac{1+R_n\cos\phi_n}{2}  \ , \quad
			p_{Y}  :=\frac{1}{n}\sum_{j=1}^n p_{Y,j}=\frac{1+R_n\sin\phi_n}{2} \ .
	\end{equation*}
	To obtain the closed forms, note that $R_n\ge\cos a>0$, so \Cref{def:atan2} gives $C_n=R_n\cos\vartheta_n$ and $S_n=R_n\sin\vartheta_n$. The angle-addition formulas then give
	\begin{equation*}
		\begin{aligned}
			\frac{1}{n}\sum_{j=1}^n\cos(\phi+\gamma_j)
			 & =C_n\cos\phi-S_n\sin\phi
			=R_n(\cos\vartheta_n\cos\phi-\sin\vartheta_n\sin\phi)
			=R_n\cos\phi_n \ , \\
			\quad
			\frac{1}{n}\sum_{j=1}^n\sin(\phi+\gamma_j)
			 & =C_n\sin\phi+S_n\cos\phi
			=R_n(\cos\vartheta_n\sin\phi+\sin\vartheta_n\cos\phi)
			=R_n\sin\phi_n \ ,
		\end{aligned}
	\end{equation*}
	where the last equalities use $\phi_n=\phi+\vartheta_n$. Averaging $p_{X,j}$ and $p_{Y,j}$ over $j$ and substituting these sums yields the expressions for $p_X$ and $p_Y$.
	For the majority voting estimator, we first pool these frequencies:
	\begin{equation*}
		\widehat p_X:=\frac{1}{n}\sum_{j=1}^n\widehat p_{X,j} \ ,
		\quad
		\widehat p_Y:=\frac{1}{n}\sum_{j=1}^n\widehat p_{Y,j} \ .
	\end{equation*}
	Define
	\begin{equation*}
		\widehat{\phi}_{\mv}:=\operatorname{atan2}(2\widehat p_Y-1,\,2\widehat p_X-1).
	\end{equation*}
	The centered vector of the pooled probabilities is
	\begin{equation*}
		\mathbf v_n
		:=(2p_{X}-1,2p_{Y}-1)
		=R_n(\cos\phi_n,\sin\phi_n) \ .
	\end{equation*}
	Set
	$
		u
		:=({\cos a\sin\eta})/({2\sqrt2})
	$
	and
	let $\mathcal E$ be the event
	\begin{equation*}
		|\widehat p_X-p_X|\le u
		\quad\text{and}\quad
		|\widehat p_Y-p_Y|\le u \ ,
	\end{equation*}
	conditional on
	the phases $\gamma_j$.
	The Hoeffding inequality and a union bound give
	\begin{equation*}
		\Pr(
		\mathcal E^{\mathsf c}
		)
		\le4\exp(-nm{u}^2)=4\exp(-nm\cos^2 a\sin^2\eta/8) \ .
	\end{equation*}
	Since $\sin\eta>\eta/2=\epsilon\phi_{\min}/4$, the stated shot
	bound makes the right-hand side at most $\delta/2$.
	This bound does not depend on the values of the $\gamma_j$, so it also holds unconditionally: $\Pr(\mathcal E^{\mathsf c})\le\delta/2$.
	On the intersection of the measurement event $\mathcal E$ and the phase event $\mathcal F$, define
	\begin{equation*}
		\widehat{\mathbf v}
		:=(2\widehat p_{X}-1,2\widehat p_{Y}-1)
		=\|\widehat{\mathbf v}\|_2\,(\cos\widehat{\phi}_{\mv},\sin\widehat{\phi}_{\mv})
		\ .
	\end{equation*}%
	Then
	\(
	\|\widehat{\mathbf v}-\mathbf v_n\|_2
	\le2\sqrt2\,u
	=\cos a\sin\eta
	\le R_n\sin\eta<R_n \ .
	\)
	Moreover,
	$\pi-|\phi_n|>\eta$ on $\mathcal F$, since $|\phi_n|\le|\phi|+\eta$ and $\pi-|\phi|>2\eta$.  Applying \Cref{fact:atan2} with reference
	radius $R_n$ therefore gives
	\begin{equation*}
		|\widehat{\phi}_{\mv}-\phi_n|\le\eta \ .
	\end{equation*}%
	On $\mathcal E\cap\mathcal F$, the triangle inequality gives
	\begin{displaymath}
		|\widehat{\phi}_{\mv}-\phi|
		\le
		|\widehat{\phi}_{\mv}-\phi_n|
		+|\phi_n-\phi|
		\le2\eta
		=\epsilon\phi_{\min}
		\le\epsilon|\phi| \ .
	\end{displaymath}
	A union bound proves that this occurs with probability at least
	$1-\delta$, because $\Pr(\mathcal F^{\mathsf c})\le\delta/2$ and $\Pr(\mathcal E^{\mathsf c})\le\delta/2$.
\end{proof}

\relativeCClearning*

\begin{proof}
	Define the real, unwrapped sum $\theta:=\sum_{j=1}^n\phi_j=n\phi+\sum_{j=1}^n\gamma_j$. If $a=\sigma=0$, the phase error is zero and the phase event below has probability one without invoking the Bernstein inequality. Otherwise, $a>0$, and the following concentration argument applies.
	Apply the Bernstein inequality (\Cref{fact:bernstein-inequality}) to
	\begin{equation*}
		S_n:=\sum_{j=1}^n \gamma_j
	\end{equation*}
	with $R=a$ and $V=n\sigma^2$.
	For any $u>0$,
	\begin{equation*}
		\begin{split}
			\Pr\!\big(|\theta-n\phi|\ge nu\big)
			=\Pr\!\big(|S_n|\ge nu\big)
			\le
			2\exp\!\left(
			-\frac{n\,u^2}{2\sigma^2+\frac{2}{3}au}
			\right) \ .
		\end{split}
	\end{equation*}
	Set $u=\epsilon|\phi|/2$ and define
	\begin{equation*}
		g(z):=
		\frac{z^2}{8\sigma^2+\frac{4}3a\epsilon z} \ ,
	\end{equation*}
	for $z>0$. For nonzero noise parameters,
	\begin{equation*}
		g'(z)
		=
		\frac{
			16\sigma^2z+\frac{4}3a\epsilon z^2
		}{
			\left(8\sigma^2+\frac{4}3a\epsilon z\right)^2
		}
		>0 \ .
	\end{equation*}
	(If $a=\sigma=0$, then $\theta=n\phi$ deterministically.)
	Thus $|\phi|\ge\phi_{\min}$ implies
	$g(|\phi|)\ge g(\phi_{\min})$.  This is why the unknown
	$|\phi|$ in the Bernstein exponent can be replaced by its known
	lower bound when fixing the qubit budget.
	The stated bound on
	$n$ consequently gives
	\begin{equation*}
		\Pr(\mathcal E_{\mathrm{phase}})
		\ge1-\frac{\delta}{2} \ ,
		\quad
		\mathcal E_{\mathrm{phase}}
		:=
		\left\{
		|\theta-n\phi|
		\le\frac{\epsilon n|\phi|}{2}
		\right\}.
	\end{equation*}

	Set
	$
		\eta:=n\epsilon\phi_{\min}/2
	$.
	The assumption on the range of $\phi$ gives
	$
		|\theta|
		\le n(\phi_{\max}+a)
		<\pi/(1+\epsilon)
	$.
	It then follows that
	\begin{equation*}
		0<\eta
		<\frac{\epsilon\pi}{2(1+\epsilon)}
		<\frac{\pi}{2} \ ,
		\quad
		|\theta|+\eta
		<\frac{1+\epsilon/2}{1+\epsilon}\pi
		<\pi \ .
	\end{equation*}
	For fixed $\gamma_j$, the readout probabilities are $q_X=(1+\cos\theta)/2$ and $q_Y=(1+\sin\theta)/2$.
	Let $\widehat q_X$ and $\widehat q_Y$ be the empirical
	frequencies of
	outcome $0$ in the coherent consensus readout, and define
	\begin{equation*}
		\widehat\theta
		:=
		\operatorname{atan2}
		\bigl(2\widehat q_Y-1,\ 2\widehat q_X-1\bigr) \ ,
		\quad
		\widehat\phi_{\cc}:=\frac{\widehat\theta}{n} \ .
	\end{equation*}
	The stated shot bound is equivalent to
	\begin{equation*}
		m
		\ge\frac{32}{\eta^2}
		\ln\frac{8}{\delta}
		=
		\frac{128}{n^2\epsilon^2\phi_{\min}^2}
		\ln\frac{8}{\delta}\ ,
	\end{equation*}
	so applying \Cref{fact:phase-estimation-absolute} conditionally on
	the realized phases, with $\alpha=\theta$, absolute error
	$\eta$, and failure probability $\delta/2$, guarantees
	\begin{equation*}
		\Pr(\mathcal E_{\mathrm{shot}})\ge1-\frac{\delta}{2} \ , \quad
		\mathcal E_{\mathrm{shot}}
		:=
		\left\{
		|\widehat\theta-\theta|
		\leq \frac{n\epsilon\phi_{\min}}{2}
		\right\} \ .
	\end{equation*}
	The additional factor $1/n$ relative to majority voting comes from estimating
	the amplified phase $\theta$, approximately equal to $n\phi$.
	On $\mathcal E_{\mathrm{phase}}\cap\mathcal E_{\mathrm{shot}}$,
	the triangle
	inequality yields
	\begin{equation*}
		\begin{split}
			|\phi-\widehat\phi_{\cc}|
			\le \frac{|n\phi-\theta|}{n} + \frac{|\theta-\widehat{\theta}|}{n}
			\le \frac{\epsilon\,|\phi|}{2} + \frac{\epsilon\,\phi_{\min}}{2}
			\le \epsilon\,|\phi| \ ,
		\end{split}
	\end{equation*}
	using $\phi_{\min}\le|\phi|$.  
    A union bound gives
	\(
	\Pr(
	\mathcal E_{\mathrm{phase}}
	\cap
	\mathcal E_{\mathrm{shot}}
	)
	\ge1-\delta,
	\)
	proving the claim.
\end{proof}

    \noindent The proofs
    use the
    two-argument
    arctangent
    (\Cref{def:atan2})
    and its
    properties~(\Cref{fact:atan2}).
    The
    shot budget
    is
    accordingly
    evenly
    split
    between
    measurements
    in the $X$
    and $Y$ bases
    to estimate
    $\cos\alpha$
    and
    $\sin\alpha$,
    respectively,
    where
    $\alpha$
    is
    the
    phase
    of
    interest for
    majority voting or coherent consensus.
    Knowledge of
    the bounds on $\phi$
    may
    inform
    alternative
    shot allocations
    in practice.
    For instance,
    in~\Cref{sec_sim_and_exp},
    $\phi$ is
    known to be small,
    and
    all
    the
    measurements
    are carried out in the
    $Y$ basis.

\makeatletter
\begingroup
  \def\NAT@bibsetnum#1{\setlength{\topsep}{\z@}\NATx@bibsetnum{\ref{LastBibItem-app}}}%
  \def\endthebibliography{\auto@bib@innerbib\edef\@currentlabel{\arabic{NAT@ctr}}%
    \label{LastBibItem-app}\endNAT@thebibliography\aftergroup\auto@bib@empty}%
\providecommand{\href}[2]{#2}\begingroup\raggedright\endgroup
\endgroup
\makeatother

\end{document}